\DeclareMathOperator{\plp}{lp}
\DeclareMathOperator{\df}{d_F}
\DeclareMathOperator{\dwf}{d_{wF}}
\DeclareMathOperator{\simpl}{simpl}
\newcommand{\lp}[2]{\ensuremath{\plp\left(#1,#2\right)}}
\def \d{{\mathrm d}}
\def\RR{{\mathbb R}}
\def\NN{{\mathbb N}}
\def\SS{{\mathbb S}}
\let\epsilon\relax
\newcommand{\epsilon}{\varepsilon}
\title{Random projections for curves in high dimensions}
\author{Ioannis Psarros}{Athena Research Center,  Greece}{ipsarros@di.uoa.gr}{https://orcid.org/0000-0002-5079-5003}{The author was partially supported by the EU's Horizon 2020 Research and Innovation programme, under the grant agreement No. 957345: “MORE”.}
\author{Dennis Rohde}{University of Bonn,  Germany}{drohde@uni-bonn.de}{https://orcid.org/0000-0001-8984-1962}{The author was partially supported by the Hausdorff Center for Mathematics.}
\authorrunning{I. Psarros and D. Rohde}
\keywords{polygonal curves, time series, dimension reduction, Johnson-Lindenstrauss lemma, Fr\'echet distance} 
\begin{document}
\setlength{\parindent}{0pt}
\maketitle
\begin{abstract}
    Modern time series analysis requires the ability to handle datasets that are inherently high-dimensional; examples include applications in climatology, where measurements from numerous sensors must be taken into account, or inventory tracking of large shops, where the dimension is defined by the number of tracked items. The standard way to mitigate computational issues arising from the high dimensionality of the data is by applying some dimension reduction technique that preserves the structural properties of the ambient space. The dissimilarity between two time series is often measured by ``discrete'' notions of distance, e.g. the dynamic time warping or the discrete Fr\'echet distance. Since all these distance functions are computed directly on the points of a time series, they are sensitive to different sampling rates or gaps. The continuous Fr\'echet distance offers a popular alternative which aims to alleviate this by taking into account all points on the polygonal curve obtained by linearly interpolating between any two consecutive points in a sequence.

    We study the ability of random projections \`a la Johnson and Lindenstrauss to preserve the continuous Fr\'echet distance of polygonal curves by effectively reducing the dimension. In particular, we show that one can reduce the dimension to $O(\epsilon^{-2} \log N)$, where $N$ is the total number of input points while preserving the continuous Fr\'echet distance between any two determined polygonal curves within a factor of $1\pm \epsilon$. We conclude with applications on clustering.
\end{abstract}
\pagenumbering{gobble}
\clearpage
\pagenumbering{arabic}
\setcounter{page}{1}
\section{Introduction}

Time series analysis lies in the core of various modern applications. Typically, a time series consists of various (physical) measurements over time. Formally, it is a finite sequence of points in $\mathbb{R}^d$. Depending on the use case, the ambient space may be extremely high-dimensional, for example $d \in 2^{\Omega(\log n)}$, or even $d \in 2^{\Omega(n)}$, where $n$ is the number of given sequences. For example, large facilities nowadays supervise their production lines using a plethora of sensors. Another concrete example are climatology applications, where data consist of measurements from multiple sensors, each one corresponding to a different dimension.

Many analysis techniques are based on (dis-)similarity between time series, c.f.~\cite{clustering_time_series}. This is often measured by distance functions such as the Euclidean distance, which however requires the time series to be of same length and does not include any form of alignment between the sequences. This is of course less expressive than distances which are indeed defined over an optimal alignment, e.g.~the dynamic time warping, or the discrete Fr\'echet distance, which are based on Euclidean distances between the points but enable compensation of differences in phase. A common downside of these distances is that they take into account solely the points of a time series. Hence, they are sensitive to differences in sampling rates or data gaps. Here, the continuous Fr\'echet distance offers a popular alternative which aims to alleviate this issue by assuming that time series are discretizations of continuous functions of time. It is an extension of the discrete Fr\'echet distance that takes into account all points on the polygonal curves obtained by linearly interpolating between any two consecutive points in a sequence (where the interpolation is carried out only implicitly).

Two main parameters typically govern computational tasks associated with the Fr\'echet distance: the lengths of the time series and the number of dimensions of the ambient space. In this paper, we study the problem of compressing the input with respect to the latter parameter using a dimension reducing linear transform that preserves Euclidean distances within a factor of $(1\pm \epsilon)$. These transforms, which are usually named Johnson-Lindenstrauss (JL) transforms or embeddings, are a popular tool in dimensionality reduction. The preservation of pairwise distances within a factor of $(1\pm\epsilon)$ is sometimes called JL guarantee. Recent work has provided various probability distributions over JL transforms~\cite{Dasgupta2003,jl_sphericity,Achlioptas03,Linial1995,KaneN14}, which are efficient to sample from and which yield the JL guarantee with at least constant positive probability while the target dimension is only $O(\epsilon^{-2}\log n)$, where $n$ is the size of the input point set. Towards applying this result on time series, one can easily guarantee that all Euclidean distances between points of the time series are preserved. While this has direct implications on ``discrete'' notions of distances between time series, the case of the continuous Fr\'echet distance is far more intriguing.

\subsection{Related Work}

In their seminal paper \cite{johnson_lindenstrauss}, Johnson and Lindenstrauss proved the following statement, which is commonly known as the Johnson-Lindenstrauss lemma and coined the term JL embedding.
\begin{theorem}[\cite{johnson_lindenstrauss}]
\label{JLlemma}
For any $n \in \NN$ and $\epsilon \in (0,1)$ there exists a probability distribution over linear maps $f \colon \RR^d \to \RR^{d^{\prime}}$, where $d^{\prime} \in O(\epsilon^{-2}\log n)$, such that for any $n$-point set $X \subset \mathbb{R}^d$ the following holds with high probability over the choice of $f$: \[\forall p,q \in X:~(1-\epsilon) \| p-q\| \leq \|f(p)-f(q)\| \leq (1+\epsilon) \|p-q\|.\] 
\end{theorem}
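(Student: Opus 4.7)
The plan is to construct $f$ as a random linear map, specifically a scaled Gaussian matrix, and then show that (i) it preserves the norm of any fixed vector up to a factor of $1\pm\epsilon$ with probability at least $1 - 2\exp(-c\epsilon^2 d')$ for a suitable constant $c > 0$, and (ii) a union bound over the $\binom{n}{2}$ pairwise difference vectors $p-q$ for $p,q \in X$ gives the JL guarantee simultaneously for all pairs once $d' \in O(\epsilon^{-2}\log n)$ is taken large enough.

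Concretely, I would define $f(x) = Ax$ where $A \in \RR^{d' \times d}$ has entries drawn i.i.d.\ from $\mathcal{N}(0, 1/d')$. The first step is to fix an arbitrary $v \in \RR^d$ with $v \neq 0$ and analyze $\|Av\|^2$. Because the rows of $A$ are independent Gaussian vectors, each coordinate of $Av$ is a centered Gaussian with variance $\|v\|^2/d'$. Hence $d' \|Av\|^2 / \|v\|^2$ is a $\chi^2$ random variable with $d'$ degrees of freedom, and standard concentration inequalities for chi-squared distributions (e.g.\ via moment generating functions) yield
\[
\Pr\bigl[\bigl|\|Av\|^2 - \|v\|^2\bigr| > \epsilon \|v\|^2\bigr] \leq 2\exp(-c\epsilon^2 d'),
\]
for some absolute constant $c > 0$, valid for all $\epsilon \in (0,1)$. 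This is the heart of the argument, and I expect the main technical step to be the careful derivation of the chi-squared tail bound with a constant that is independent of $v$ and of $d'$.

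Next, I would apply this single-vector bound to each of the $\binom{n}{2}$ vectors of the form $p - q$ with $p, q \in X$ distinct, and take a union bound. The failure probability is then at most $2\binom{n}{2}\exp(-c\epsilon^2 d')$. Choosing $d' = \lceil C \epsilon^{-2} \log n \rceil$ for a sufficiently large constant $C$ (depending on $c$) makes this quantity at most $n^{-\alpha}$ for any desired $\alpha > 0$, giving ``with high probability'' in the sense of the statement. Finally, taking square roots in the inequality $(1-\epsilon)\|v\|^2 \leq \|Av\|^2 \leq (1+\epsilon)\|v\|^2$ and using $\sqrt{1\pm\epsilon} \in [1\pm\epsilon]$ (after possibly rescaling $\epsilon$ by a constant absorbed into $C$) yields the desired bilipschitz bound
\[
(1-\epsilon)\|p-q\| \leq \|f(p) - f(q)\| \leq (1+\epsilon)\|p-q\|
\]
for all $p, q \in X$, completing the argument.
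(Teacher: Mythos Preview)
The paper does not prove this theorem; it is quoted from \cite{johnson_lindenstrauss} as a known result, and the surrounding text only remarks that the original proof of Johnson and Lindenstrauss uses an orthogonal projection onto a random $O(\epsilon^{-2}\log n)$-dimensional subspace, while later proofs \cite{DBLP:conf/stoc/IndykM98,Dasgupta2003,Achlioptas03,KaneN14,AilonC09} replace this by multiplication with a suitable random matrix.

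Your proposal is a correct instance of the latter family: a scaled i.i.d.\ Gaussian matrix, chi-squared concentration for $\|Av\|^2/\|v\|^2$, and a union bound over the $\binom{n}{2}$ difference vectors. This is essentially the Indyk--Motwani / Dasgupta--Gupta argument the paper alludes to. It differs from the original Johnson--Lindenstrauss construction (a uniformly random orthogonal projection), but your version is arguably simpler because the entries of $A$ are independent, which makes the distribution of $\|Av\|^2$ exactly $\|v\|^2/d'$ times a $\chi^2_{d'}$ variable and allows a direct MGF bound; the original proof has to contend with the dependence among coordinates of a random orthonormal frame. Both routes give the same target dimension $d'\in O(\epsilon^{-2}\log n)$, so there is nothing to correct in your outline.
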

In their proof, Johnson and Lindenstrauss \cite{johnson_lindenstrauss} show that this can be achieved by orthogonally projecting the points onto a 
random linear subspace of dimension $O(\epsilon^{-2} \log n)$ -- and indeed there are point sets that require $\Omega(\epsilon^{-2} \log n)$ dimensions \cite{DBLP:journals/dm/Alon03,DBLP:conf/icalp/LarsenN16,DBLP:conf/focs/LarsenN17}. Several proofs of their statement followed, these however don't require a proper projection but only a multiplying the points with a certain random matrix, cf. ~\cite{DBLP:conf/stoc/IndykM98,Dasgupta2003,Achlioptas03,KaneN14,AilonC09}.

The impact of a JL embedding on higher-dimensional objects other than points has already been studied. Magen \cite{DBLP:conf/random/Magen02,DBLP:journals/dcg/Magen07} shows that applying a (scaled) JL embedding not only to a given set $P \subset \mathbb{R}^d$ of points, but to $P \cup W$, where $W \subset \mathbb{R}^d$ is a well-chosen set of points determined by $P$, approximately preserves the height and angles of all triangles determined by any three points in $P$. Magen even extends this result and shows that by a clever choice of $W$, the volume (Lebesgue measure) of the convex hull of any $k-1$ points from $P$ is approximately preserved when the target dimension is in $\Theta(\epsilon^{-2} k \log \lvert P \rvert)$. Furthermore, in this case the distance of any point from $P$ to the affine hull of any $k-1$ other points from $P$ is also approximately preserved. Furthermore, JL embeddings can even be utilized to preserve all pairwise Euclidean and geodesic distances on a smooth manifold \cite{BW09}.

Fr\'echet distance preserving embeddings are a relatively unexplored topic. Recently Driemel and Krivosija \cite{DriemelK18} studied the first Fr\'echet distance preserving embedding for $c$-packed curves, which are curves whose intersections with any ball of radius $r$ are of length at most $cr$. This class of curves was introduced by Driemel et al. \cite{DBLP:journals/dcg/DriemelHW12} and has so far been considered a viable assumption for realistic curves, see e.g. \cite{DBLP:conf/compgeom/AgarwalFPY16,DBLP:conf/focs/Bringmann14,DBLP:journals/siamcomp/DriemelH13}. Driemel and Krivosija consider projections on random lines, where curves are orthogonally projected on a vector which is sampled uniformly at random from the unit sphere. They observed that in any case (even if the curves are not $c$-packed), the discrete Fr\'echet distance between the curves decreases. Furthermore, they show that with high probability the discrete Fr\'echet distance between two curves $\sigma$ and $\tau$, of complexity (number of vertices of the curve) at most $m$, decreases by a factor in $O(m)$. Finally, they proved that there exist $c$-packed curves such that the discrete Fr\'echet distance decreases by a factor in $\Omega(m)$. The latter also holds for the continuous Fr\'echet distance and for the dynamic time warping distance.

More recently, 
Meintrup et al. \cite{DBLP:conf/nips/MeintrupMR19}
studied JL embeddings in the context of preserving the Fr\'echet distance to facilitate $k$-median clustering of curves in a high-dimensional ambient space. They show that when the dimension is reduced to $\Theta(\epsilon^{-2} \log N)$, where $N$ is the total number of vertices of the given curves, the Fr\'echet distances are preserved up to a combined multiplicative error of $(1\pm\epsilon)$ and additive error of $\pm\epsilon L$, where $L$ is the largest arclength of any input curve. For their proof, they only use the JL guarantee, i.e., the $(1\pm\epsilon)$-preservation of Euclidean distances, and properties of the polygonal curves and the Fr\'echet distance, while linearity is not taken into account. In this setting, it seems that the additive error is possible -- Meintrup et al. give a simple example where some vertex-to-vertex distances expand and others contract, which induces an additive error to the Fr\'echet distance. Meintrup et al. complement their results with experimental evaluation showing that in real world data and using a JL transform (which is a \textit{linear} map), the Fr\'echet distance is preserved within the multiplicative error only in almost any case. The other cases can not be distinguished between a failed attempt to obtain a JL embedding (recall the probabilistic nature) and a successful attempt to obtain a JL embedding with the additive error occuring.

\subsection{Our Contributions}

We study the ability of random projections à la Johnson and Lindenstrauss to preserve the continuous Fr\'echet distances among a given set of $n$ polygonal curves, each of complexity (number of vertices of the curve) at most $m$. We show that there exists a set $X$ of vectors (in $\mathbb{R}^d$), of size polynomial in $n$ and $m$ and depending only on the given curves, such that any JL transform for the curves vertices and $X$ also preserves the continuous Fr\'echet distance between any two of the given polygonal curves within a factor of $(1\pm \epsilon)$, without additional additive error. This effectively extends the JL guarantee to pairwise Fr\'echet distances. By plugging in any known JL transform from one of \cite{Dasgupta2003,jl_sphericity,Achlioptas03,Linial1995,KaneN14} we obtain our main dimension reduction result which states that one can reduce the number of dimensions to $O(\epsilon^{-2} \log(nm))$. We achieve our result using a completely different approach than Meintrup et al. \cite{DBLP:conf/nips/MeintrupMR19}. Our approach relies on Fr\'echet distance predicates originating from \cite{DBLP:conf/soda/AfshaniD18}. These allow a reduction from deciding the continuous distance to a finite set of events occurring. Using only the predicates, it is relatively easy to prove that the Fr\'echet distance between two curves does not expand by more than a factor of $(1+\epsilon)$ under a (\textit{linear}) JL transform. To prove that the Fr\'echet distance does not contract by more than a factor of $(1-\epsilon)$ is however much more challenging. We achieve this by proving that \textit{all} distances between one fixed point and any point on a fixed line do not contract by more than a factor of $(1-\epsilon)$ when a JL transform is applied to a well-chosen set of four vectors determined by the point and the line, which is then applied to any vertex of any curve and any line determined by an edge of any curve. We note that this result is comparable to a result by Magen \cite{DBLP:conf/random/Magen02,DBLP:journals/dcg/Magen07}, but our statement is stronger since it takes into account \textit{all} distances between the fixed point and the line and not only the affine distance, i.e., the distance between the point and its orthogonal projection onto the line.

Our motivation is that distance preserving dimensionality reductions imply improved algorithms for various tasks. Best-known algorithms for many proximity problems under the continuous Fr\'echet distance have  exponential dependency on the dimension, in at least one of their performance parameters. Such algorithms either directly employ the continuous Fr\'echet distance, e.g.~the approximation algorithms for $k$-clustering problems~\cite{DBLP:conf/soda/BuchinDR21}, or approximate it with the discrete Fr\'echet distance by resampling the time series to a higher granularity. For example, to the best of our knowledge, the best solution for the approximate near neighbor (ANN) problem in general dimensions derives from building the data structure of Filtser et al. \cite{FFK20}, which originally solves the problem for the discrete Fr\'echet distance, on a modified input. The idea is that a new dense set of vertices can be added to each input polygonal curve so that the discrete Fr\'echet distance of the resulting curves approximates the continuous Fr\'echet distance of the original curves. Under the somewhat restrictive assumption that the arclength of each curve is short, a small number of new vertices suffices. Even in this case though, the space and preprocessing time of the data structure depends exponentially on the number of dimensions. Obviously, polynomial-time algorithms (e.g.~\cite{DBLP:conf/soda/BuchinDGHKLS19}) can also benefit from reducing the number of dimensions, especially when it comes to real applications.    

Our embedding naturally inherits desired properties of the JL transforms like the fact that they are oblivious to the input. This makes it directly applicable to data structure problems like the above-mentioned ANN problem. Moreover, we show that our embedding is also applicable to estimating clustering costs. First, we show that one can approximate the optimal $k$-center cost within a constant factor, with an algorithm that has no dependency on the original dimensionality apart from an initial step of randomly projecting the input curves. Second, we show that one can use any algorithm for computing the $k$-median cost in the dimensionality-reduced space to get a constant factor approximation of the $k$-median cost in the original space. 

\subsection{Organization}

The paper is organized as follows. In \cref{sec:preliminaries} we introduce the necessary notation, definitions and the concept of Fr\'echet distance predicates. In \cref{section:embedding} we prove our main result in two steps. First, as a warm-up, we prove that an application of any JL transformation for the given curves vertices and a polynomial-sized set $X$ determined by these does not increase Fr\'echet distances by more than a factor of $(1+\epsilon)$. In \cref{ss:lowerbound} we prove the challenging part that this also does not decrease Fr\'echet distances by less than a factor of $(1-\epsilon)$. Interestingly, here a different polynomial-sized set $X^\prime$ is used. In \cref{ss:mainresults} we combine both to our main result. Finally, in \cref{section:clustering} we apply our main result to clustering of curves; we modify an existing approximation algorithm for the $(k,\ell)$-center problem (see~\cite{DBLP:conf/soda/BuchinDGHKLS19}) which has negligibly decreased approximation quality compared to the original and we prove that applying \textit{any} algorithm for the $(k,\ell)$-median problem (such as the one from \cite{DBLP:conf/soda/BuchinDR21}) on the embedded curves leads to a constant factor approximation in terms of clustering cost. \cref{section:conclusions} concludes the paper. 

\section{Preliminaries}
\label{sec:preliminaries}

For $n \in \mathbb{N}$ we define $[n] = \{1, \dots, n\}$. By $\lVert \cdot \rVert$ we denote the Euclidean norm, by $\langle \cdot, \cdot \rangle$ we denote the Euclidean dot product and by $\mathbb{S}^{d-1} = \{ p \in \mathbb{R}^d \mid \lVert p \rVert = 1 \}$ we denote the unit sphere in $\mathbb{R}^{d}$. We define line segments, the building blocks of polygonal curves.

\begin{definition}
	A line segment between two points $p_1, p_2 \in \mathbb{R}^d$, denoted by $\overline{p_1p_2}$, is the set of points $\{ (1-\lambda)p_1 + \lambda p_2 \mid \lambda \in [0,1] \}$. For $\lambda \in \mathbb{R}$ we denote by $\lp{\overline{p_1p_2}}{\lambda}$ the point $(1-\lambda)p_1 + \lambda p_2$, lying on the line supporting the segment $\overline{p_1p_2}$.
\end{definition}

We formally define polygonal curves.

\begin{definition}
    \label{def:polygonal_curve}
	A (parameterized) curve is a continuous mapping $\tau \colon [0,1] \rightarrow \mathbb{R}^d$. A curve $\tau$ is polygonal, if and only if, there exist $v_1, \dots, v_m \in \mathbb{R}^d$, no three consecutive on a line, called $\tau$'s vertices and $t_1, \dots, t_m \in [0,1]$ with $t_1 < \dots < t_m$, $t_1 = 0$ and $t_m = 1$, called $\tau$'s instants, such that $\tau$ connects every two consecutive vertices $v_i = \tau(t_i), v_{i+1} = \tau(t_{i+1})$ by a line segment.
\end{definition}

We call the line segments $\overline{v_1v_2}, \dots, \overline{v_{m-1}v_m}$ the edges of $\tau$ and $m$ the complexity of $\tau$, denoted by $\lvert \tau \rvert$. Sometimes we will argue about a sub-curve $\tau[i,j]$ of a given curve $\tau$, which is the polygonal curve determined by the vertices $v_{i}, \dots, v_j$. We define two notions of continuous Fr\'echet distances. We note that the weak Fr\'echet distances is however used only rarely.

\begin{definition}
    Let $\sigma, \tau$ be curves. The weak Fr\'echet distance between $\sigma$ and $\tau$ is \[ \dwf(\sigma,\tau) = \inf_{\substack{f \colon [0,1] \rightarrow [0,1]\\ g \colon [0,1] \rightarrow [0,1]}} \max_{t \in [0,1]} \lVert \sigma(f(t)) - \tau(g(t)) \rVert, \]
    where $f$ and $g$ are continuous functions with $f(0) = g(0) = 0$ and $f(1) = g(1) = 1$.
    The Fr\'echet distance between $\sigma$ and $\tau$ is \[ \df(\sigma, \tau) = \inf_{\substack{f \colon [0,1] \rightarrow [0,1]\\ g \colon [0,1] \rightarrow [0,1]}} \max_{t \in [0,1]} \lVert \sigma(f(t)) - \tau(g(t)) \rVert, \] where $f$ and $g$ are continuous bijections with $f(0) = g(0) = 0$ and $f(1) = g(1) = 1$.
\end{definition}

We define the type of embedding we are interested in. Since we want to keep our results general, we do not specify the target number of dimensions. As a consequence, we drop the JL-terminology and call these $(1\pm\epsilon)$-embeddings.

\begin{definition}
    \label{def:embedding}
	Given a set $P \subset \mathbb{R}^d$ of points and $\epsilon \in (0,1)$, a function $f\colon \mathbb{R}^d \rightarrow \mathbb{R}^{d^\prime}$ is a $(1 \pm \epsilon)$-embedding for $P$, if it holds that \[ \forall p,q \in P: (1-\epsilon) \lVert p - q \rVert \leq \lVert f(p) - f(q) \rVert \leq (1+\epsilon) \lVert p - q \rVert. \]
\end{definition}

We note that if $f$ is linear and $0 \in P$, then  $\forall p \in P$: $(1-\epsilon) \lVert p \rVert \leq \lVert f(p) \rVert \leq (1+\epsilon) \lVert p \rVert$. 

We now define valid sequences with respect to two polygonal curves. Such a sequence can be seen as a discrete skeleton in deciding the continuous Fr\'echet distance and is derived from the free space diagram concept used in Alt and Godau's algorithm \cite{alt_godau}.

\begin{definition}
    Let $\sigma, \tau$ be polygonal curves with vertices $v^\sigma_1, \dots, v^\sigma_{\lvert \sigma \rvert}$, respectively $v^\tau_1, \dots, v^\tau_{\lvert \tau \rvert}$. A valid sequence with respect to $\sigma$ and $\tau$ is a sequence $\mathcal{F} = (i_1, j_1), \dots, (i_k, j_k)$ with
    \begin{itemize}
        \item $i_1 = j_1 = 1$, $i_k = \lvert \sigma \rvert - 1$, $j_k = \lvert \tau \rvert - 1$,
        \item $(i_l, j_l) \in [\lvert\sigma\rvert-1] \times [\lvert\tau\rvert-1]$,
        \item $(i_l-i_{l-1}, j_l-j_{l-1}) \in \{ (0,1), (1,0), (0,-1), (-1,0) \}$ for all $1 < l < k$ and
        \item any pair $(i_l,j_l) \in [\lvert\sigma\rvert-1] \times [\lvert\tau\rvert-1]$ appears at most once in $\mathcal{F}$.
    \end{itemize}
    A valid sequence is said to be monotone if $(i_l-i_{l-1}, j_l-j_{l-1}) \in \{ (0,1), (1,0) \}$ for all $1 < l < k$.
\end{definition}

Further decomposing the free space diagram concept, any valid sequence for two curves $\sigma,\tau$ and any radius $r \geq 0$ induces a set of predicates which truth values in conjunction determine whether $\df(\sigma,\tau) \leq r$, respectively $\dwf(\sigma,\tau) \leq r$.

\begin{definition}[\cite{DBLP:conf/soda/AfshaniD18,DBLP:journals/dcg/DriemelNPP21}]
    \label{def:predicates}
    Let $\sigma, \tau$ be polygonal curves with vertices $v^\sigma_1, \dots, v^\sigma_{\lvert \sigma \rvert}$, respectively $v^\tau_1, \dots, v^\tau_{\lvert \tau \rvert}$ and $r \in \mathbb{R}_{\geq 0}$. We define the Fr\'echet distance predicates for $\sigma$ and $\tau$ with respect to $r$.
    \begin{itemize}
        \item $(P_1)^{\sigma,\tau,r}$: This predicate is true, iff $\lVert \sigma_1 - \tau_1 \rVert \leq r$.
        \item $(P_2)^{\sigma,\tau,r}$: This predicate is true, iff $\lVert \sigma_{\lvert \sigma \rvert} - \tau_{\lvert \tau \rvert} \rVert \leq r$.
        \item $(P_3)^{\sigma,\tau,r}_{(i,j)}$: This predicate is true, iff there exists a point $p \in \overline{v^\sigma_{i}v^\sigma_{i+1}}$ with $\lVert p - v^\tau_{j} \rVert \leq r$.
        \item $(P_4)^{\sigma,\tau,r}_{(i,j)}$: This predicate is true, iff there exists a point $p \in \overline{v^\tau_{j}v^\tau_{j+1}}$ with $\lVert p - v^\sigma_{i} \rVert \leq r$.
        \item $(P_5)^{\sigma,\tau,r}_{(i,j,k)}$: This predicate is true, iff there exist $p_1 = \lp{\overline{v^\sigma_jv^\sigma_{j+1}}}{t_1}$ and $p_2 = \lp{\overline{v^\sigma_jv^\sigma_{j+1}}}{t_2}$ with $\lVert v^\tau_{i} - p_1 \rVert \leq r$, $\lVert v^\tau_k - p_2 \rVert \leq r$ and $t_1 \leq t_2$.
        \item $(P_6)^{\sigma,\tau,r}_{(i,j,k)}$: This predicate is true, iff there exist $p_1 = \lp{\overline{v^\tau_iv^\tau_{i+1}}}{t_1}$ and $p_2 = \lp{\overline{v^\tau_iv^\tau_{i+1}}}{t_2}$ with $\lVert v^\sigma_{j} - p_1 \rVert \leq r$, $\lVert v^\sigma_k - p_2 \rVert \leq r$ and $t_1 \leq t_2$.
    \end{itemize}
\end{definition}

The following two theorems state the aforementioned facts. These will be one of our main tools in obtaining our main results. We note that these are rephrased here to fit our needs.

\begin{theorem}[\cite{DBLP:journals/dcg/DriemelNPP21}]
    \label{theo:predicates_weak_frechet}
    Let $\sigma, \tau$ be polygonal curves and $r \in \mathbb{R}_{\geq 0}$. There exists a valid sequence $\mathcal{F}$ with respect to $\sigma$ and $\tau$, such that $(P_1)^{\sigma,\tau,r} \wedge (P_2)^{\sigma,\tau,r} \wedge \Psi_w^{\sigma,\tau,r}(\mathcal{F})$ is true, where \[ \Psi_w^{\sigma,\tau,r}(\mathcal{F}) = \bigwedge_{\substack{(i,j) \in [\lvert \sigma \rvert] \times [\lvert \tau \rvert]\\(i,j-1),(i,j) \in \mathcal{F}}} (P_3)^{\sigma,\tau,r}_{(i,j)} \bigwedge_{\substack{(i,j) \in [\lvert \tau \rvert] \times [\lvert \sigma \rvert]\\(i-1,j),(i,j) \in \mathcal{F}}} (P_4)^{\sigma,\tau,r}_{(i,j)}, \] if, and only if, $\dwf(\sigma, \tau) \leq r$.
\end{theorem}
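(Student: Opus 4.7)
\medskip
\textbf{Proof plan.} The plan is to prove the equivalence via the free space diagram of Alt and Godau~\cite{alt_godau}. I would work in the parameter domain $D = [1,\lvert\sigma\rvert]\times[1,\lvert\tau\rvert]$, decomposed into the cells $C_{i,j}=[i,i+1]\times[j,j+1]$ corresponding to pairs of edges $\overline{v^\sigma_iv^\sigma_{i+1}}$ and $\overline{v^\tau_jv^\tau_{j+1}}$. Call a point $(i+s, j+t)\in D$ \emph{free} iff $\lVert \lp{\overline{v^\sigma_iv^\sigma_{i+1}}}{s} - \lp{\overline{v^\tau_jv^\tau_{j+1}}}{t}\rVert \leq r$. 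It is classical that $\dwf(\sigma,\tau)\leq r$ if and only if there is a continuous (not necessarily monotone) path through the free points of $D$ from $(1,1)$ to $(\lvert\sigma\rvert,\lvert\tau\rvert)$. Inside every cell the free set is convex, being the preimage of a closed Euclidean ball under an affine map, and its intersection with each side of the cell is a (possibly empty) single interval.

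\medskip
Next I would relate paths in $D$ to valid sequences as defined above by introducing the cell adjacency graph $G$ whose vertices are the cells $C_{i,j}$ and in which $C_{i,j}$ and $C_{i',j'}$ are adjacent exactly when the two cells share a side whose free part is non-empty. By \cref{def:predicates}, the existence of a free point on the common side between horizontally adjacent cells $C_{i,j-1}$ and $C_{i,j}$ is equivalent to $(P_3)^{\sigma,\tau,r}_{(i,j)}$, and analogously between vertically adjacent cells $C_{i-1,j}$ and $C_{i,j}$ to $(P_4)^{\sigma,\tau,r}_{(i,j)}$. Thus a walk in $G$ from $C_{1,1}$ to $C_{\lvert\sigma\rvert-1,\lvert\tau\rvert-1}$ that never repeats a vertex is exactly a valid sequence $\mathcal{F}$, and the associated conjunction of side-predicates is exactly $\Psi_w^{\sigma,\tau,r}(\mathcal{F})$. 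The endpoint predicates $(P_1)$ and $(P_2)$ simply express that the corners $(1,1)$ and $(\lvert\sigma\rvert,\lvert\tau\rvert)$ of $D$ themselves are free.

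\medskip
The forward direction then goes as follows: assuming $\dwf(\sigma,\tau)\leq r$, the corner freeness yields $(P_1)$ and $(P_2)$, and a free path from $(1,1)$ to $(\lvert\sigma\rvert,\lvert\tau\rvert)$ induces a walk in $G$ between the corresponding cells; this walk is a valid sequence after shortcutting any cell revisits (which is possible because the free space within each cell is connected, so loops can be excised without destroying freeness of the transitions used). In the reverse direction, given $\mathcal{F}$ and the truth of all predicates, I would construct a free path piecewise: $(P_1)$ places a free starting point at $(1,1)$ in $C_{1,1}$; for every transition in $\mathcal{F}$ the witnessing $(P_3)$ or $(P_4)$ provides a free point on the shared side, which is joined by straight segments to free points in the two incident cells using convexity of cell-free-space; and $(P_2)$ places the free terminal point at $(\lvert\sigma\rvert,\lvert\tau\rvert)$.

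\medskip
The main obstacle I anticipate is the \emph{at-most-once} condition in the definition of a valid sequence: a literal tracing of the free path can enter the same cell many times. Handling this correctly needs the shortcut argument mentioned above, which relies essentially on the convexity of each cell's free space and on the observation that a repeated visit to a cell can be collapsed into a single visit whose entry and exit points remain free. The rest of the equivalence is essentially a bookkeeping translation between the geometric description in the free space diagram and the combinatorial description in \cref{def:predicates}.
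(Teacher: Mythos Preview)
The paper does not prove this theorem at all: it is quoted as a known result from~\cite{DBLP:journals/dcg/DriemelNPP21} (with the remark that it has been rephrased), and the paper uses it as a black box. So there is no ``paper's own proof'' to compare against; your sketch is essentially the argument one would expect to find in the cited source, namely the free-space diagram characterisation of the weak Fr\'echet distance going back to Alt and Godau~\cite{alt_godau}.

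Your outline is sound. One small point: you attribute the shortcutting step in the forward direction to convexity of the per-cell free space, but in fact it is purely combinatorial. The edges of the cell adjacency graph $G$ are determined by whether the shared side has a non-empty free interval, and this is a static property of $\sigma,\tau,r$, independent of any particular free path. Hence once the free path yields a walk in $G$ from $C_{1,1}$ to $C_{\lvert\sigma\rvert-1,\lvert\tau\rvert-1}$, extracting a simple walk (and thus a valid sequence) is just the elementary fact that every walk in a graph contains a simple subwalk with the same endpoints. Convexity is genuinely needed only in the reverse direction, exactly where you already use it: to join the witnessing side points by straight segments inside each cell. A second minor point worth a sentence in a full write-up is the degenerate case where the free path passes through a cell corner (shared by up to four cells); this is handled either by a small perturbation or by observing that a free corner forces freeness of the incident side endpoints, so one can replace a diagonal crossing by two axis-aligned steps.
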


\begin{theorem}[\cite{DBLP:conf/soda/AfshaniD18,DBLP:journals/dcg/DriemelNPP21}]
    \label{theo:predicates_frechet}
    Let $\sigma, \tau$ be polygonal curves and $r \in \mathbb{R}_{\geq 0}$. There exists a monotone valid sequence $\mathcal{F}$ with respect to $\sigma$ and $\tau$, such that $(P_1)^{\sigma,\tau,r} \wedge (P_2)^{\sigma,\tau,r} \wedge \Psi^{\sigma,\tau,r}(\mathcal{F})$ is true, where \[ \Psi^{\sigma,\tau,r}(\mathcal{F}) = \hspace{-1.5em} \bigwedge_{\substack{(i,j)\in [\lvert \sigma \rvert] \times [\lvert \tau \rvert]\\(i,j-1),(i,j) \in \mathcal{F}}} \hspace{-1.5em} (P_3)^{\sigma,\tau,r}_{(i,j)} \bigwedge_{\substack{(i,j)\in [\lvert \tau \rvert] \times [\lvert \sigma \rvert]\\(i-1,j),(i,j) \in \mathcal{F}}} \hspace{-1.5em} (P_4)^{\sigma,\tau,r}_{(i,j)} \bigwedge_{\substack{(i,j,k)\in [\lvert \tau \rvert] \times [\lvert \sigma \rvert] \times [\lvert \tau \rvert]\\(i,j-1),(i,k) \in \mathcal{F}\\j < k}} \hspace{-2em} (P_5)^{\sigma,\tau,r}_{(i,j,k)} \bigwedge_{\substack{(i,j,k) \in [\lvert \tau \rvert] \times [\lvert \sigma \rvert] \times [\lvert \sigma \rvert] \\(i-1,j),(k,j) \in \mathcal{F}\\i < k}} \hspace{-2em} (P_6)^{\sigma,\tau,r}_{(i,j,k)},\] if, and only if, $\df(\sigma, \tau) \leq r$.
\end{theorem}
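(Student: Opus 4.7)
The plan is to reduce both directions to the free-space diagram $F_r(\sigma,\tau) = \{(s,t) \in [0,1]^2 \mid \lVert \sigma(s) - \tau(t)\rVert \leq r\}$, whose cells $C_{i,j} = [t_i^\sigma, t_{i+1}^\sigma] \times [t_j^\tau, t_{j+1}^\tau]$ correspond to edge pairs. Within each cell, $F_r \cap C_{i,j}$ is convex, being the preimage of a Euclidean ball under an affine map, restricted to the rectangle. By the Alt--Godau characterization, $\df(\sigma,\tau) \leq r$ holds if and only if there is a coordinate-nondecreasing curve in $F_r$ from $(0,0)$ to $(1,1)$; this equivalence is the pivot for both directions.

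For the forward direction, I would fix a monotone path $\gamma$ in $F_r$ witnessing $\df(\sigma,\tau) \leq r$ and read off the list of cells it enters, in order and with repetitions suppressed, to obtain $\mathcal{F} = (i_1,j_1),\dots,(i_k,j_k)$; monotonicity of $\gamma$ forces each between-cell move to lie in $\{(0,1),(1,0)\}$, so $\mathcal{F}$ is a monotone valid sequence. Predicate $P_1$ is witnessed by $\gamma(0) = (0,0)$ and $P_2$ by $\gamma(1)=(1,1)$. A move from $(i,j-1)$ to $(i,j)$ means $\gamma$ crosses the shared horizontal edge of the two corresponding cells, producing a point on an edge of one curve within distance $r$ of a vertex of the other, i.e.\ $P_3$ (and symmetrically $P_4$). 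When $\mathcal{F}$ contains two pairs of the form $(i,j-1),(i,k)$ with $j<k$, the two relevant crossings of $\gamma$ both live on the same physical edge, and the monotonicity of $\gamma$ orders their parameters exactly as demanded by $P_5$ (symmetrically $P_6$).

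For the converse, assuming $(P_1) \wedge (P_2) \wedge \Psi^{\sigma,\tau,r}(\mathcal{F})$, I would build a monotone path by concatenating straight segments inside the cells visited by $\mathcal{F}$. In each cell $C_{i_l,j_l}$, convexity of $F_r \cap C_{i_l,j_l}$ guarantees that any two witness points on its boundary can be connected by a segment lying inside the free space; the predicates $P_3, P_4$ supply witnesses on the two cell boundaries shared with the neighbouring cells of $\mathcal{F}$, while $P_1, P_2$ cover the two extreme cells. The delicate issue is that when $\mathcal{F}$ uses several cells along one row (same $\sigma$-edge) or column (same $\tau$-edge), several $P_3$- or $P_4$-witnesses live on a common physical edge and must be chosen in an order that keeps the assembled path globally monotone; this is exactly the combinatorial information packaged into $P_5, P_6$, whose truth guarantees witness pairs with parameters $t_1 \leq t_2$ along that shared edge.

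The main obstacle is precisely this last step: turning the existential predicate promises into a simultaneously consistent choice of witnesses across the run of cells sharing an edge. In the forward direction monotonicity of $\gamma$ gives the ordering for free, so $P_5,P_6$ drop out; in the reverse direction each $P_3/P_4$ only asserts local existence, and one must argue that on a maximal run the witnesses can be re-selected in the correct order. I plan to handle this by, on each maximal row-run (and symmetrically column-run), first pinning down the two extreme witnesses supplied by the corresponding $P_5$ (resp.\ $P_6$) applied to the run's endpoints, and then using convexity of $F_r$ restricted to the thin slab above the common edge to interpolate monotone intermediate witnesses. Once the witnesses along each run are arranged in order, the per-cell convex connecting segments stitch together into a globally monotone curve from $(0,0)$ to $(1,1)$ inside $F_r$, giving $\df(\sigma,\tau)\leq r$ by Alt--Godau.
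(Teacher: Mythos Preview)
The paper does not give its own proof of this theorem: it is quoted with citations to prior work and used as a black box. Your free-space diagram reduction is exactly the standard argument underlying those references, and the forward direction is fine as written.

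There is one soft spot in your converse direction. On a maximal run of cells sharing a $\sigma$-edge you propose to invoke $P_5$ only for the two extreme crossings and then ``interpolate monotone intermediate witnesses'' via convexity of $F_r$ in a thin slab. But the free space along the shared $\sigma$-edge, viewed across the several cell boundaries of the run, is a \emph{union of intervals} $I_{j_0+1},\dots,I_{j_1}$ (one per $\tau$-vertex crossed), not a single convex set; convexity holds only within each cell. Knowing merely that the first interval starts left of where the last one ends does not let you thread ordered points through all of them (take $I_1=[0,1]$, $I_2=[3,4]$, $I_3=[1.5,2]$). What rescues the argument is that $\Psi$ contains a $P_5$ predicate for \emph{every} ordered pair of crossings along the run, not just the extreme pair; these pairwise conditions amount to $\min I_j \le \max I_{j'}$ for all $j<j'$, which is precisely the condition needed to choose $t_{j_0+1}\le\cdots\le t_{j_1}$ with $t_j\in I_j$ (e.g.\ greedily via $t_j:=\max(t_{j-1},\min I_j)$). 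Replace the convexity-interpolation step by this interval argument and the proof goes through.
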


\section{Linear Embeddings Preserve Fr\'echet Distances}
\label{section:embedding}

In this section we prove our main results on embeddings of polygonal curves that approximately preserve the Fr\'echet distance. In the following \cref{lem:upper_bound}, we show that linear $(1\pm \epsilon)$-embeddings for a polynomial number of points determined by the input polygonal curves imply embeddings for the curves that are not expansive by a factor greater than $(1+\epsilon)$. Similarly, in \cref{ss:lowerbound}, we show that linear $(1\pm \epsilon)$-embeddings for a polynomial number of points determined by the curves, imply embeddings for the curves that are not contractive by a factor smaller than $(1-\epsilon)$. Combining these two bounds yields our main results in \cref{ss:mainresults}. Our main dimensionality reduction result states that one can embed a set of $n$ polygonal curves of complexity at most $m$ into a Euclidean space of dimensions $d^\prime \in O(\epsilon^{-2}\log (nm))$, so that all Fr\'echet distances are preserved within a factor of $(1\pm \epsilon)$. The embedding is implemented by mapping the vertices of each polygonal curve with a JL transform. The image of each input curve is a curve in $\RR^{d'}$ having as vertices the images of the original vertices. 

\begin{lemma}
    \label{lem:upper_bound}
    Let $\sigma, \tau$ be polygonal curves with vertices $v^\sigma_1, \dots, v^\sigma_{\lvert \sigma \rvert}$, respectively $v^\tau_1, \dots, v^\tau_{\lvert \tau \rvert}$, and let $f$ be a linear $(1 \pm \epsilon)$-embedding for $P = \{v^\sigma_1, \dots, v^\sigma_{\lvert \sigma \rvert}, v^\tau_1, \dots, v^\tau_{\lvert \tau \rvert} \} \cup P^\prime$, where $P^\prime$ is a set of points determined by $\sigma$ and $\tau$ with $\lvert P^\prime \rvert \in O(\lvert \sigma \rvert^2 \cdot \lvert \tau \rvert + \lvert \tau \rvert^2 \cdot \lvert \sigma \rvert)$.
    Let $\sigma^\prime$ and $\tau^\prime$ be polygonal curves with vertices $f(v^\sigma_1), \dots, f(v^\sigma_{\lvert \sigma \rvert})$, respectively $f(v^\tau_1), \dots, f(v^\tau_{\lvert \tau \rvert})$. It holds that
    \begin{itemize}
        \item $\dwf(\sigma^\prime, \tau^\prime) \leq (1+\epsilon) \dwf(\sigma, \tau)$ and 
        \item $\df(\sigma^\prime, \tau^\prime) \leq (1+\epsilon) \df(\sigma, \tau)$.
    \end{itemize}
\end{lemma}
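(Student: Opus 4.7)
The plan is to reduce the inequality $\df(\sigma^\prime,\tau^\prime) \le (1+\epsilon)\df(\sigma,\tau)$ to the finite list of Fréchet distance predicates supplied by \cref{theo:predicates_frechet}. Setting $r = \df(\sigma,\tau)$, that theorem yields a monotone valid sequence $\mathcal{F}$ such that $(P_1)^{\sigma,\tau,r} \wedge (P_2)^{\sigma,\tau,r} \wedge \Psi^{\sigma,\tau,r}(\mathcal{F})$ holds. The idea is to design $P^\prime$ so that the JL guarantee for $P \cup P^\prime$, combined with the linearity of $f$, lets the \emph{same} sequence $\mathcal{F}$ certify all of the corresponding predicates for $\sigma^\prime, \tau^\prime$ at the inflated radius $r^\prime = (1+\epsilon)r$. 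Since $\lvert\sigma^\prime\rvert = \lvert\sigma\rvert$ and $\lvert\tau^\prime\rvert = \lvert\tau\rvert$, such an $\mathcal{F}$ is automatically valid for $\sigma^\prime, \tau^\prime$, and \cref{theo:predicates_frechet} then immediately gives $\df(\sigma^\prime,\tau^\prime) \le r^\prime$. The weak case follows identically via \cref{theo:predicates_weak_frechet}, which involves only the simpler predicates $(P_1)$--$(P_4)$. The structural observation used throughout is that, since $f$ is linear, $f(\lp{\overline{v_a v_b}}{t}) = \lp{\overline{f(v_a) f(v_b)}}{t}$ for every $t \in \RR$, so edges, their supporting lines, and affine parameters along them are preserved verbatim by $f$.

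For $(P_1)$ and $(P_2)$ the JL guarantee on the two pertinent vertex pairs immediately yields the $(1+\epsilon)$-blow-up. For each index pair $(i,j)$ appearing in a $(P_3)$-predicate I would add to $P^\prime$ the point on $\overline{v_i^\sigma v_{i+1}^\sigma}$ closest to $v_j^\tau$; symmetrically for $(P_4)$ I add the closest point on the relevant $\tau$-edge to the relevant $\sigma$-vertex. These closest points do not depend on $r$, so they also serve for the weak case. Whenever the original predicate holds at radius $r$, the chosen witness has distance at most $r$ from the opposite vertex; applying JL to that one extra pair and transporting the witness through $f$ (which, by linearity, keeps it on the correct embedded edge) certifies the embedded predicate at radius $r^\prime$. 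Together these witnesses contribute $O(\lvert\sigma\rvert \lvert\tau\rvert)$ points.

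The main obstacle is $(P_5)$ and $(P_6)$, because their witnesses must lie on the \emph{line} supporting an edge and satisfy the additional monotonicity constraint $t_1 \le t_2$. For every pair $(i,j) \in [\lvert\tau\rvert] \times [\lvert\sigma\rvert - 1]$ the set $\{ t \in \RR : \lVert \lp{\overline{v_j^\sigma v_{j+1}^\sigma}}{t} - v_i^\tau \rVert \le r \}$ is a closed (possibly empty) interval on the supporting line; I would place in $P^\prime$ the two points on that line realized by its smaller and its larger endpoint, with an arbitrary fallback if the interval is empty (in which case every triple-predicate involving this pair is vacuously false). For any triple $(i,j,k)$ at which $(P_5)^{\sigma,\tau,r}_{(i,j,k)}$ holds, choosing $t_1$ to be the smaller endpoint of the $v_i^\tau$-interval and $t_2$ to be the larger endpoint of the $v_k^\tau$-interval yields the most permissive legitimate pair, so $t_1 \le t_2$ automatically; JL applied to the two witness-vertex pairs together with the linearity-based parameter preservation then certifies the embedded predicate at radius $r^\prime$. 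A symmetric construction handles $(P_6)$. This contributes another $O(\lvert\sigma\rvert \lvert\tau\rvert)$ points, giving $\lvert P^\prime \rvert \in O(\lvert\sigma\rvert \lvert\tau\rvert)$ in total, comfortably within the stated bound $O(\lvert\sigma\rvert^2 \lvert\tau\rvert + \lvert\tau\rvert^2 \lvert\sigma\rvert)$. I expect the parameter-preservation argument for $(P_5), (P_6)$ to be the only step requiring real attention; everything else is a direct appeal to JL and linearity.
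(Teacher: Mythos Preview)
Your approach is essentially the same as the paper's: both reduce to the predicate characterization in \cref{theo:predicates_weak_frechet,theo:predicates_frechet}, place a finite set of witness points into $P'$, use the JL guarantee on those witnesses, and invoke the linearity identity $f(\lp{\overline{pq}}{t})=\lp{\overline{f(p)f(q)}}{t}$ to transport witnesses to the correct embedded edge with the same parameter (and hence the same order). The only difference is in the \emph{choice} of witnesses. The paper simply takes, for each true predicate along the realizing valid sequence $\mathcal{F}$, any point(s) certifying it; this makes $P'$ depend on $\mathcal{F}$ and yields the stated cubic bound from the $(P_5),(P_6)$ triples. You instead pick canonical witnesses independent of $\mathcal{F}$: the closest point on an edge for $(P_3),(P_4)$, and the two interval endpoints of $\{t:\lVert \lp{\overline{v_j v_{j+1}}}{t}-v_i\rVert\le r\}$ for every vertex--edge pair, reusing them across all $(P_5),(P_6)$ triples. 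Your extremal choice automatically satisfies $t_1\le t_2$ whenever the original predicate holds, and gives $\lvert P'\rvert\in O(\lvert\sigma\rvert\lvert\tau\rvert)$, which is a bit tighter than the paper's bound. Either way the argument is the same.
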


The proof follows by an application of the $(1 \pm \epsilon)$-embedding to all points determined by the (weak) Fr\'echet distance predicates. The proof can be found in \cref{sec:appendix}.

\subsection{Lower bound}
\label{ss:lowerbound}
In this section, we show that we can use linear $(1\pm \epsilon)$-embeddings for a polynomial number of points determined by the input polygonal curves to define  embeddings for the curves that are not contractive with respect to their Fr\'echet distance by a factor smaller than $(1-\epsilon)$. 

We first introduce a few necessary technical lemmas and then we proceed with the main result. 
We make use of the following lemma, which indicates that inner products are (weakly) concentrated in $(1\pm \epsilon)$-embeddings. Slightly different versions of this lemma have been used before (see e.g.~\cite{DBLP:conf/focs/ArriagaV99,DBLP:journals/jcss/PapadimitriouRTV00,Sarlos06}). Since our statement is a bit more generic, because it holds for any linear $(1\pm \epsilon)$-embedding, and we make use of the involved scaling factors, we include a proof in the appendix for completeness. 

\begin{lemma}
    \label{lem:inner_product}
    Let $f$ be a linear $(1 \pm \epsilon)$-embedding for a finite set $P \subset \mathbb{R}^d$ with $0 \in P$. For all $p,q \in P$ it holds that \[\langle p, q \rangle - 16 \epsilon(\lVert p \rVert \cdot \lVert q \rVert) \leq \langle f(p), f(q) \rangle \leq \langle p, q \rangle + 14 \epsilon(\lVert p \rVert \cdot \lVert q \rVert).\]
\end{lemma}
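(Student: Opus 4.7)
The plan is to base the proof on the polarization identity
\begin{equation*}
    \langle a, b \rangle = \tfrac{1}{2}\bigl(\lVert a \rVert^2 + \lVert b \rVert^2 - \lVert a - b \rVert^2\bigr),
\end{equation*}
which provides the natural bridge from inner products to squared norms -- and squared norms are exactly the quantities that the $(1 \pm \epsilon)$-embedding controls. Applying the identity to both $\langle p, q \rangle$ and $\langle f(p), f(q) \rangle$ and subtracting gives
\begin{equation*}
    \langle f(p), f(q) \rangle - \langle p, q \rangle = \tfrac{1}{2}\Bigl( \bigl(\lVert f(p) \rVert^2 - \lVert p \rVert^2\bigr) + \bigl(\lVert f(q) \rVert^2 - \lVert q \rVert^2\bigr) - \bigl(\lVert f(p) - f(q) \rVert^2 - \lVert p - q \rVert^2\bigr) \Bigr),
\end{equation*}
so the entire task reduces to bounding these three signed squared-norm residuals.

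First I would use that $0 \in P$ together with the linearity of $f$: applying the embedding hypothesis to the pairs $(p,0)$, $(q,0)$, and $(p,q)$ directly yields the two-sided bounds $\lVert f(p) \rVert \in [(1-\epsilon)\lVert p \rVert, (1+\epsilon)\lVert p \rVert]$, together with the analogous statements for $q$ and for $\lVert f(p) - f(q) \rVert$ versus $\lVert p - q \rVert$. Squaring these yields
\begin{equation*}
    -(2\epsilon - \epsilon^2)\lVert x \rVert^2 \leq \lVert f(x) \rVert^2 - \lVert x \rVert^2 \leq (2\epsilon + \epsilon^2)\lVert x \rVert^2
\end{equation*}
for each of the three choices $x \in \{p,\, q,\, p-q\}$. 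The slight asymmetry between $(1+\epsilon)^2 - 1 = 2\epsilon + \epsilon^2$ and $1 - (1-\epsilon)^2 = 2\epsilon - \epsilon^2$, combined with the extra sign flip on the $\lVert p - q \rVert^2$ term coming from the polarization identity, is what I expect to propagate to the asymmetric constants $+14$ and $-16$ appearing in the statement.

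Plugging these three estimates into the subtracted polarization identity produces upper and lower bounds on $\langle f(p), f(q) \rangle - \langle p, q \rangle$ of the form $\epsilon$ times a signed linear combination of $\lVert p \rVert^2$, $\lVert q \rVert^2$, and $\lVert p - q \rVert^2$. The step I expect to be the main technical hurdle is recasting this into the $\lVert p \rVert \cdot \lVert q \rVert$ form demanded by the lemma. My plan here is to substitute $\lVert p - q \rVert^2 = \lVert p \rVert^2 - 2 \langle p, q \rangle + \lVert q \rVert^2$ so that the symmetric $\lVert p \rVert^2$ and $\lVert q \rVert^2$ pieces combine across the signed terms and leave behind a clean expression in $\langle p, q \rangle$, then bound that remainder via Cauchy-Schwarz $\lvert \langle p, q \rangle \rvert \leq \lVert p \rVert \cdot \lVert q \rVert$ and absorb the surviving $\epsilon^2$ contributions using $\epsilon \in (0,1)$. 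Tracking the signed $\epsilon$-coefficients through the upper and lower estimates separately should then yield the claimed $+14\epsilon$ and $-16\epsilon$ multipliers on $\lVert p \rVert \cdot \lVert q \rVert$.
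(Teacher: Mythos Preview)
Your plan has a real gap at the step you flag as the ``main technical hurdle''. Carrying it out for the upper bound gives
\[
2\bigl(\langle f(p),f(q)\rangle - \langle p,q\rangle\bigr) \;\leq\; (2\epsilon+\epsilon^2)\bigl(\lVert p\rVert^2+\lVert q\rVert^2\bigr) + (2\epsilon-\epsilon^2)\lVert p-q\rVert^2,
\]
and after the substitution $\lVert p-q\rVert^2 = \lVert p\rVert^2+\lVert q\rVert^2-2\langle p,q\rangle$ the right-hand side becomes
\[
4\epsilon\bigl(\lVert p\rVert^2+\lVert q\rVert^2\bigr) \;-\; 2(2\epsilon-\epsilon^2)\langle p,q\rangle.
\]
The two contributions to $\lVert p\rVert^2+\lVert q\rVert^2$ carry the \emph{same} sign (both are positive error terms in the upper estimate), so they add rather than ``combine across the signed terms'' and cancel. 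You are then left needing $\lVert p\rVert^2+\lVert q\rVert^2 \leq C\,\lVert p\rVert\,\lVert q\rVert$ for some absolute constant~$C$, which is impossible whenever the two norms are of very different magnitude (take $\lVert p\rVert=1$, $\lVert q\rVert=\delta\to 0$; the ratio blows up). Cauchy--Schwarz on $\langle p,q\rangle$ handles only the second term; it does nothing for the first.

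The paper avoids exactly this obstruction by first reducing to the case $\lVert p\rVert=\lVert q\rVert=1$. There one has $\lVert p\rVert^2+\lVert q\rVert^2=2$ and, via the triangle inequality, $\lVert p-q\rVert^2\leq(\lVert p\rVert+\lVert q\rVert)^2=4$, so the bound $2\langle p,q\rangle+6\epsilon+2\epsilon\cdot 4=2\langle p,q\rangle+14\epsilon$ drops out directly (and $-16\epsilon$ analogously). The general case is then obtained from the identity $\langle f(p),f(q)\rangle = \lVert p\rVert\,\lVert q\rVert\,\langle f(p/\lVert p\rVert),\,f(q/\lVert q\rVert)\rangle$, which holds by linearity of~$f$. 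This normalize-then-rescale manoeuvre is the missing idea in your plan.
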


Next, we prove that $(1\pm \epsilon)$-embeddings for a specific point set do not contract distances between any point on a fixed ray starting from the origin and a fixed point lying in a certain halfspace by a factor smaller than $(1-3\epsilon)$.

\begin{lemma}
\label{lemma:segmentnoncontraction0} 
Let $x \in \RR^d$ and $u\in \SS^{d-1}$ such that $\langle x,u\rangle \leq 0$. 
Let $f$ be a linear $(1 \pm \epsilon/16)$-embedding for $\{0,x,u\}$. For any $\lambda \geq 0$, we have
\[
\left\|f(x)-\lambda \cdot f(u)\right\|  \geq (1-3\epsilon)\|x-\lambda u \|.
\]
\end{lemma}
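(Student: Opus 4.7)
The plan is to square both sides and exploit bilinearity. By linearity of $f$,
\[
\|f(x)-\lambda f(u)\|^2 = \|f(x)\|^2 - 2\lambda\langle f(x),f(u)\rangle + \lambda^2 \|f(u)\|^2,
\]
and similarly $\|x-\lambda u\|^2 = \|x\|^2 - 2\lambda\langle x,u\rangle + \lambda^2$. The $(1\pm\epsilon/16)$-embedding assumption (with $0$ as the base point and $\|u\|=1$) immediately gives $\|f(x)\|^2 \geq (1-\epsilon/16)^2\|x\|^2$ and $\|f(u)\|^2 \geq (1-\epsilon/16)^2$, while \cref{lem:inner_product} applied to $\{0,x,u\}$ yields $\langle f(x),f(u)\rangle \leq \langle x,u\rangle + (14\epsilon/16)\|x\|$.

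Substituting these estimates produces a lower bound for $\|f(x)-\lambda f(u)\|^2$ of the form $(1-\epsilon/16)^2(\|x\|^2+\lambda^2) - 2\lambda\langle x,u\rangle - (28\lambda\epsilon/16)\|x\|$. I then invoke the hypothesis $\langle x,u\rangle \leq 0$ in two distinct ways. First, since $-2\lambda\langle x,u\rangle \geq 0$ and $(1-\epsilon/16)^2 \leq 1$, I may rescale this nonnegative term down by $(1-\epsilon/16)^2$ without loss, which reassembles the first two summands into $(1-\epsilon/16)^2\|x-\lambda u\|^2$. Second, the same sign condition gives $\|x\|^2+\lambda^2 \leq \|x-\lambda u\|^2$, so by AM--GM $2\lambda\|x\| \leq \|x\|^2+\lambda^2 \leq \|x-\lambda u\|^2$, which lets me dominate the residual error $(28\lambda\epsilon/16)\|x\|$ by a constant multiple of $\epsilon\|x-\lambda u\|^2$.

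Combining the two manipulations I expect to obtain $\|f(x)-\lambda f(u)\|^2 \geq (1-c\epsilon)\|x-\lambda u\|^2$ for a small absolute constant $c$ (my arithmetic indicates $c \leq 1$ suffices once the $(\epsilon/16)^2$ term is dropped). Since $\sqrt{1-c\epsilon} \geq 1 - 3\epsilon$ for $\epsilon \in (0,1)$ and $c \leq 1$, taking square roots completes the proof.

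The \textbf{main obstacle} is exactly the cross error linear in $\lambda\|x\|$ produced by the inner-product bound of \cref{lem:inner_product}: there is no a priori reason it should be controlled by $\|x-\lambda u\|^2$, which in general can be arbitrarily smaller than $\|x\|\cdot\lambda$. The crux is that the sign hypothesis $\langle x,u\rangle \leq 0$ simultaneously (i) makes $\|x-\lambda u\|^2$ large enough to absorb $\|x\|^2+\lambda^2$ and hence, via AM--GM, also $2\lambda\|x\|$, and (ii) makes the $-2\lambda\langle x,u\rangle$ contribution nonnegative so that it can be freely rescaled by $(1-\epsilon/16)^2$ when recombining into $\|x-\lambda u\|^2$.
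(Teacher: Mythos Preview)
Your proposal is correct and follows essentially the same route as the paper's proof: expand $\|f(x)-\lambda f(u)\|^2$, apply the norm bounds from the embedding and the inner-product bound of \cref{lem:inner_product}, use $\langle x,u\rangle\le 0$ to rescale the nonnegative cross term by $(1-\epsilon/16)^2$ and recombine into $(1-\epsilon/16)^2\|x-\lambda u\|^2$, and then absorb the residual $O(\epsilon)\lambda\|x\|$ using $\lambda\|x\|\le\|x-\lambda u\|^2$ (the paper derives this from $\|x-\lambda u\|\ge\lambda$ and $\|x-\lambda u\|\ge\|x\|$, which is your AM--GM step in disguise). The only cosmetic difference is that the paper immediately relaxes the $(1\pm\epsilon/16)$ constants to $(1\pm\epsilon)$ at the outset, whereas you carry the $\epsilon/16$ through and tighten the final constant.
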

\begin{proof}
By \cref{lem:inner_product,def:embedding}: 
    \begin{enumerate*}[label=\roman*)]
        \item $\langle f(x),f(u) \rangle \in \langle x, u\rangle \pm \epsilon \|x\|$,
        \item $\|f(x)\| \in (1\pm\epsilon) \|x\|$,
        \item $\|f(u)\| \in (1\pm \epsilon)$.
    \end{enumerate*}
For any  $\lambda \geq 0$ we have:
\begin{align}
    \left\|f(x)-\lambda\cdot f( u) \right\|^2 & = \|f(x)\|^2 + \lambda^2\cdot \|f(u)\|^2  - 2\lambda \cdot \langle f(x),f(u)\rangle \nonumber \\
    &\geq (1-\epsilon)^2 \|x\|^2 +(1-\epsilon)^2 \lambda^2  - 2\lambda \langle x,u\rangle - 2\lambda \epsilon \|x\| \label{eq:sarlosapplication} \\ 
    &\geq  (1-\epsilon)^2 \|x\|^2 +(1-\epsilon)^2\lambda^2  - (1-\epsilon)^2\cdot {2\lambda} \cdot \langle x,u\rangle - 2\lambda \epsilon \cdot \|x\| \label{eq:negativeinner}\\
    &\geq  (1-\epsilon)^2 \|x-\lambda u\|^2 - 2\epsilon\lambda  \|x\|   \nonumber\\
      &\geq  (1-\epsilon)^2 \|x-\lambda u\|^2 - 2\epsilon \|x-\lambda u\|^2 \label{eq:lambdabound}\\
      &\geq (1-3\epsilon)^2 \|x-\lambda u\|^2, \nonumber 
\end{align}
where the last inequality holds, since $\epsilon/16 \in (0,1/4]$. In \cref{eq:sarlosapplication} we use events i), ii), iii), 
in \cref{eq:negativeinner} we use the fact that $\langle x, u \rangle \leq 0$, and in \cref{eq:lambdabound} we use the fact that $\langle x, u \rangle \leq 0$ and $\lambda \geq 0$ implies that $\|x-\lambda u \| \geq \lambda$ and $\|x-\lambda u \| \geq \|x\|$.
\end{proof}

We now prove our main technical lemma. This says that given a fixed line and a fixed point $p$, there is a set $P$ of points such that any linear $(1\pm\epsilon)$-embedding for $P$ does not contract distances between $p$ and any point on the line by a factor smaller than $(1-3\epsilon)$. A somewhat similar statement appears in \cite{DBLP:conf/random/Magen02} which however focuses on the distortion of point-line distances, i.e., how the distance between a point and its orthogonal projection onto the line changes after the embedding.

\begin{lemma}
    \label{lem:affine_line_distance_contraction}
    Let $x,y,z \in \mathbb{R}^d$ and $\ell = \{ \lp{\overline{yz}}{\lambda} \mid \lambda \in \mathbb{R} \}$ be the line supporting $\overline{yz}$. Let $f$ be a linear $(1\pm\epsilon/16)$-embedding for $\{0,u,-u,x-(t+\langle x,u \rangle \cdot u)\}$, where $u \in \mathbb{S}^{d-1}$ and $t \in \mathbb{R}^d$, such that $\langle u, t \rangle = 0$ and $\{ t + \lambda u \mid \lambda \in \mathbb{R} \} = \ell$. For all $\lambda \in \mathbb{R}$ it holds that \[ \lVert f(x) - f(t + \lambda u) \rVert \geq (1-3\epsilon) \lVert x - (t+\lambda u) \rVert.\]
\end{lemma}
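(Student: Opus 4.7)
The plan is to reduce the statement to \cref{lemma:segmentnoncontraction0} by using the linearity of $f$ and a single change of variables dictated by the orthogonality built into the parameterization of $\ell$. Set $x' = x - (t + \langle x,u\rangle u)$, i.e.\ the component of $x - t$ orthogonal to $u$; since $\|u\| = 1$ and $\langle u, t\rangle = 0$, a direct computation gives $\langle x', u\rangle = 0$. Introduce the shifted parameter $\mu = \lambda - \langle x,u\rangle$. Using linearity of $f$ together with the identity $x - t = x' + \langle x,u\rangle u$, both sides of the target inequality split in the same way:
\[ \|f(x) - f(t+\lambda u)\| = \|f(x') - \mu f(u)\|, \qquad \|x - (t+\lambda u)\| = \|x' - \mu u\|. \]
Thus it suffices to prove $\|f(x') - \mu f(u)\| \geq (1-3\epsilon)\|x' - \mu u\|$ for every $\mu \in \mathbb{R}$.

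A case split on the sign of $\mu$ finishes the argument. For $\mu \geq 0$, I apply \cref{lemma:segmentnoncontraction0} to the subset $\{0, x', u\} \subseteq \{0, u, -u, x'\}$; the hypothesis $\langle x', u\rangle \leq 0$ holds with equality. For $\mu < 0$, I use linearity to rewrite $\|f(x') - \mu f(u)\| = \|f(x') - |\mu| f(-u)\|$ and $\|x' - \mu u\| = \|x' - |\mu|(-u)\|$, and then apply \cref{lemma:segmentnoncontraction0} to the subset $\{0, x', -u\} \subseteq \{0, u, -u, x'\}$ with the unit vector $-u$ in place of $u$ and parameter $|\mu| \geq 0$; again $\langle x', -u\rangle = 0 \leq 0$.

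The main conceptual step, and the reason the hypothesized embedded set contains both $u$ and $-u$ alongside the orthogonal component $x'$, is exactly this sign split on $\mu$. Once it is recognized, there is no real obstacle: the rewriting of $x - (t + \lambda u)$ as $x' - \mu u$ is the routine orthogonal decomposition of $x - t$ along $u$, and the inclusion of $x'$ in the embedded set makes $\{0, x', u\}$ and $\{0, x', -u\}$ available so that \cref{lemma:segmentnoncontraction0} applies as a black box in each case. No further computation is needed.
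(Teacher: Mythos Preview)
Your proof is correct and follows essentially the same approach as the paper: define $x' = x - (t + \langle x,u\rangle u)$, observe $\langle x', u\rangle = 0$, and apply \cref{lemma:segmentnoncontraction0} once with $u$ and once with $-u$ to cover both halves of the line via the reparametrization $\mu = \lambda - \langle x,u\rangle$. The only cosmetic difference is that the paper carries the linearity unwinding through a chain of equivalences after invoking \cref{lemma:segmentnoncontraction0}, whereas you do the substitution first and then invoke the lemma as a black box; the underlying argument is identical.
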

\begin{proof}
We first note that such an element $t$ exists, namely the orthogonal projection of $0$ onto $\ell$. Let $p = t + \langle x - t, u \rangle \cdot u = t + \langle x, u \rangle \cdot u$ be the projection of $x$ onto $\ell$ and let $x' = x-p$. Notice that
\[\langle x',u \rangle =\langle x,u \rangle-\langle t+\langle x,u\rangle \cdot u,u\rangle = \langle x,u\rangle-\langle t ,u \rangle - \langle x , u \rangle = 0.\]  We apply \cref{lemma:segmentnoncontraction0} on the vectors $x',u$. This implies that for any $\lambda\geq 0$, 
\begin{align*}
& \|f(x')-\lambda f(u) \| && \geq (1-3\epsilon)\|x' -\lambda u\| \\
\iff & \|f(x-p)-\lambda f(u) \| && \geq (1-3\epsilon)\|x-p -\lambda u\| \\
\iff & \|f(x-(t+\langle x, u \rangle \cdot u))-\lambda f(u) \| && \geq (1-3\epsilon)\|x-(t+\langle x, u \rangle \cdot u) -\lambda u\| \\
\iff & \|f(x)-f(t)-\langle x, u \rangle \cdot f(u)-\lambda f(u) \|&& \geq (1-3\epsilon)\|x-t-\langle x, u \rangle \cdot u -\lambda u\| \
\\
\iff & \|f(x)-f(t+(\langle x, u \rangle +\lambda) \cdot u) \|&& \geq (1-3\epsilon)\|x-(t+(\langle x, u \rangle +\lambda) \cdot u) \|.
\end{align*}
Now by reparametrizing $\lambda^\prime \gets \langle x,u \rangle +\lambda$, we conclude that for any $\lambda^\prime \geq \langle x, u \rangle $,
\begin{align}
\label{eq:largelambdas}
    \|f(x)-f(t+\lambda^\prime \cdot u) \|& \geq (1-3\epsilon)\|x-(t+\lambda^\prime \cdot u) \| .
\end{align}

Finally, we apply \cref{lemma:segmentnoncontraction0} on the vectors $x',-u$. Notice that $\langle x' ,-u\rangle = -\langle x',u \rangle =0$.  This implies that for any $\lambda\geq 0$, 
\begin{align*}
& \|f(x')-\lambda f(-u) \| && \geq (1-3\epsilon)\|x' -\lambda (-u)\|  \\
\iff & \|f(x-p)-\lambda f(-u) \| && \geq (1-3\epsilon)\|x-p -\lambda (-u)\| \\
\iff & \|f(x-(t+\langle x, u \rangle \cdot u))-\lambda f(-u) \| && \geq (1-3\epsilon)\|x-(t+\langle x, u \rangle \cdot u) -\lambda (-u)\| \\
\iff & \|f(x)-f(t)-\langle x, u \rangle \cdot f(u)-\lambda f(-u) \| && \geq (1-3\epsilon)\|x-t-\langle x, u \rangle \cdot u -\lambda(- u)\|
\\
\iff & \|f(x)-f(t+(\langle x, u \rangle -\lambda) \cdot u) \| && \geq (1-3\epsilon)\|x-(t+(\langle x, u \rangle -\lambda) \cdot u) \|.
\end{align*}
Now by reparametrizing $\lambda^\prime \gets \langle x,u \rangle -\lambda$, we conclude that for any $\lambda^\prime \leq \langle x, u \rangle $,
\begin{align}
\label{eq:smalllambdas}
    \|f(x)-f(t+\lambda^\prime \cdot u) \|& \geq (1-3\epsilon)\|x-(t+\lambda^\prime \cdot u) \| .
\end{align}
\cref{eq:largelambdas} and \cref{eq:smalllambdas} conclude the lemma.
\end{proof}

Using the lemma above we can finally prove the main result of this section.

\begin{lemma}
    \label{lem:lower_bound}
    Let $\sigma, \tau$ be polygonal curves with vertices $v^\sigma_1, \dots, v^\sigma_{\lvert \sigma \rvert}$, respectively $v^\tau_1, \dots, v^\tau_{\lvert \tau \rvert}$, and let $f$ be a linear $(1 \pm \epsilon/48)$-embedding for $P = \{v^\sigma_1, \dots, v^\sigma_{\lvert \sigma \rvert}, v^\tau_1, \dots, v^\tau_{\lvert \tau \rvert} \} \cup P^\prime$, where $P^\prime$ is a set of points determined by $\sigma$ and $\tau$ with $\lvert P^\prime \rvert \in O(\lvert \sigma \rvert \cdot \lvert \tau \rvert)$.
        Let $\sigma^\prime$ and $\tau^\prime$ be polygonal curves with vertices $f(v^\sigma_1), \dots, f(v^\sigma_{\lvert \sigma \rvert})$, respectively $f(v^\tau_1), \dots, f(v^\tau_{\lvert \tau \rvert})$. It holds that
    \begin{itemize}
        \item $\dwf(\sigma^\prime, \tau^\prime) \geq (1-\epsilon) \dwf(\sigma, \tau)$ and 
        \item $\df(\sigma^\prime, \tau^\prime) \geq (1-\epsilon) \df(\sigma, \tau)$.
    \end{itemize}
\end{lemma}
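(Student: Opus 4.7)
The plan is to use \cref{theo:predicates_frechet} (and \cref{theo:predicates_weak_frechet} for the weak case) to reduce non-contraction of the Fr\'echet distance to finitely many inequalities, and then to discharge each of these via either \cref{lem:affine_line_distance_contraction} or the defining property of a linear $(1\pm\epsilon/48)$-embedding applied to vertices. Setting $r^\ast = \df(\sigma^\prime,\tau^\prime)$, \cref{theo:predicates_frechet} yields a monotone valid sequence $\mathcal{F}$ such that $(P_1)^{\sigma^\prime,\tau^\prime,r^\ast}\wedge(P_2)^{\sigma^\prime,\tau^\prime,r^\ast}\wedge\Psi^{\sigma^\prime,\tau^\prime,r^\ast}(\mathcal{F})$ holds. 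The aim is then to show that each predicate appearing in this conjunction implies its analogue for $\sigma,\tau$ with radius $r^\ast/(1-\epsilon)$. The ``if'' direction of \cref{theo:predicates_frechet} then yields $\df(\sigma,\tau)\leq r^\ast/(1-\epsilon)$, equivalently $\df(\sigma^\prime,\tau^\prime)\geq(1-\epsilon)\df(\sigma,\tau)$; the weak case uses only $(P_1)$--$(P_4)$.

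The auxiliary set $P^\prime$ would be built explicitly so that \cref{lem:affine_line_distance_contraction} can be invoked on demand: for every pair consisting of a vertex of one curve and an edge of the other, $P^\prime$ contains the points $\{0,u,-u,x-(t+\langle x,u\rangle\cdot u)\}$ prescribed by that lemma, where $x$ is the vertex and $u,t$ parameterize the line supporting the edge as in the lemma. Since there are $O(\lvert\sigma\rvert\cdot\lvert\tau\rvert)$ such pairs and each contributes a constant number of points, $\lvert P^\prime\rvert\in O(\lvert\sigma\rvert\cdot\lvert\tau\rvert)$ as required.

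The next step is to lift each predicate individually. For $(P_1)$ and $(P_2)$ the $(1\pm\epsilon/48)$-embedding guarantee applied directly to vertex pairs gives the desired bound, since $1/(1-\epsilon/48)\leq 1/(1-\epsilon)$. For $(P_3)$ and $(P_4)$, linearity of $f$ ensures that any point on an embedded edge is the $f$-image of a point on the original edge with the same affine parameter; thus applying \cref{lem:affine_line_distance_contraction} with $x$ the relevant vertex and $\ell$ the relevant supporting line (whose auxiliary points lie in $P$) converts the embedded bound $r^\ast$ into an original bound $r^\ast/(1-\epsilon)$. The scaling $\epsilon/48=(\epsilon/3)/16$ is chosen precisely so that the factor $(1-3\epsilon^\prime)$ in \cref{lem:affine_line_distance_contraction} equals $(1-\epsilon)$. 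For $(P_5)$ and $(P_6)$, the same argument applies to each of the two points on the segment independently; the critical observation here is that the ordering $t_1\leq t_2$ is preserved under $f$, since $f((1-\lambda)p+\lambda q)=(1-\lambda)f(p)+\lambda f(q)$ keeps the affine parameter $\lambda$ of the $\plp$ operator invariant.

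The main obstacle is that predicates $(P_3)$--$(P_6)$ compare a fixed vertex to a continuum of points on an edge (or its supporting line), whereas $P^\prime$ can contain only polynomially many points. This is exactly what \cref{lem:affine_line_distance_contraction} circumvents: a constant-size auxiliary set per (vertex, line) pair suffices to guarantee non-contraction of \emph{every} distance between the vertex and a point of the line, so the ``continuous'' content of each predicate is discharged by finitely many embedding constraints. Everything else is bookkeeping for how many (vertex, edge) pairs arise, which is $O(\lvert\sigma\rvert\cdot\lvert\tau\rvert)$.
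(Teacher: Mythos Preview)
Your proposal is correct and follows essentially the same approach as the paper: both reduce the question to the Fr\'echet distance predicates, use \cref{lem:affine_line_distance_contraction} (with the $\epsilon/48=(\epsilon/3)/16$ scaling) to handle $(P_3)$--$(P_6)$, exploit linearity of $f$ to preserve affine parameters and hence the $t_1\le t_2$ ordering, and build $P'$ from the constant-size auxiliary sets for each (vertex, edge) pair. The only cosmetic difference is that you argue in the direct direction---lifting true predicates from $\sigma',\tau'$ at radius $r^\ast$ to $\sigma,\tau$ at radius $r^\ast/(1-\epsilon)$---whereas the paper argues the contrapositive, pushing false predicates from $\sigma,\tau$ at $r-\delta$ to $\sigma',\tau'$ at $(1-\epsilon)r-\delta$ and then letting $\delta\to 0$; your version is slightly cleaner since it avoids the $\delta$ limit.
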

\begin{proof}
    For the first claim, let $r = \dwf(\sigma, \tau)$, for the second claim let $r = \df(\sigma, \tau)$. In both cases, let $r^\prime = (1-\epsilon)r$.
    \\
    
    In the following, we prove that for any (monotone) valid sequence $\mathcal{F}$ and any $\delta > 0$ we have that $(P_1)^{\sigma^\prime, \tau^\prime, r^\prime-\delta} \wedge (P_2)^{\sigma^\prime, \tau^\prime, r^\prime-\delta} \wedge \Psi_w^{\sigma^\prime, \tau^\prime, r^\prime-\delta}(\mathcal{F})$, respectively $(P_1)^{\sigma^\prime, \tau^\prime, r^\prime-\delta} \wedge (P_2)^{\sigma^\prime, \tau^\prime, r^\prime-\delta} \wedge \Psi^{\sigma^\prime, \tau^\prime, r^\prime-\delta}(\mathcal{F})$, is false and therefore $\dwf(\sigma^\prime, \tau^\prime) > r^\prime - \delta$, respectively $\df(\sigma^\prime, \tau^\prime) > r^\prime - \delta$ by \cref{theo:predicates_weak_frechet}, respectively \cref{theo:predicates_frechet}.
    
    Now, let $\mathcal{F}$ be an arbitrary (monotone) valid sequence. By definition of $r$ and \cref{theo:predicates_weak_frechet}, respectively \cref{theo:predicates_frechet}, we know that for any $\delta > 0$ it holds that $(P_1)^{\sigma,\tau,r-\delta} \wedge (P_2)^{\sigma,\tau,r-\delta} \wedge \Psi_w^{\sigma,\tau,r-\delta}(\mathcal{F})$, respectively $(P_1)^{\sigma,\tau,r-\delta} \wedge (P_2)^{\sigma,\tau,r-\delta} \wedge \Psi^{\sigma,\tau,r-\delta}(\mathcal{F})$, is false. If $(P_1)^{\sigma,\tau,r-\delta}$ or $(P_2)^{\sigma,\tau,r-\delta}$ is false then clearly $(P_1)^{\sigma^\prime,\tau^\prime,r^\prime-\delta}$ or $(P_2)^{\sigma^\prime,\tau^\prime,r^\prime-\delta}$ is also false by \cref{def:predicates,def:embedding}. In the following, we  assume that $\Psi_w^{\sigma,\tau,r-\delta}(\mathcal{F})$, respectively $\Psi^{\sigma,\tau,r-\delta}(\mathcal{F})$ is false.
    
    Since the arguments for predicates of type $P_3$ and $P_4$ are analogous, we focus on the former type. Assume that $\Psi_w^{\sigma,\tau,r-\delta}(\mathcal{F})$ is false because a predicate $(P_3)^{\sigma,\tau,r-\delta}_{(i,j)}$ is false. This means that there does not exist a point $p \in \overline{v^\sigma_i v^\sigma_{i+1}}$ with $\lVert p - v^\tau_j \rVert \leq r - \delta$. At this point, recall that since $f$ is linear, any points $\lp{\overline{pq}}{t_1}, \dots, \lp{\overline{pq}}{t_n}$, where $p,q \in \mathbb{R}^d$, are still collinear when $f$ is applied and the relative order on the directed lines supporting $\overline{pq}$ is preserved, which is immediate since $f(\lp{\overline{pq}}{t_i}) = \lp{\overline{f(p)f(q)}}{t_i}$. By \cref{lem:affine_line_distance_contraction} for any $t \in \mathbb{R}$ and the determined point $p = \lp{\overline{v^\sigma_i v^\sigma_{i+1}}}{t}$ on the line supporting $\overline{v^\sigma_i v^\sigma_{i+1}}$ it holds that $\lVert f(v^\tau_j) - f(p) \rVert \geq (1-\epsilon) \lVert p - v^\tau_j \rVert$. Thus, for any $f(p) \in \overline{f(v^\sigma_i) f(v^\sigma_{i+1})}$ we have $p \in \overline{v^\sigma_i v^\sigma_{i+1}}$ and $\lVert f(v^\tau_j) - f(p) \rVert \geq (1-\epsilon) \lVert p - v^\tau_j \rVert$, which in conclusion is larger than $r^\prime - \delta$, hence $(P_3)^{\sigma^\prime,\tau^\prime,r^\prime-\delta}_{(i,j)}$ is false and therefore $\Psi_w^{\sigma^\prime,\tau^\prime,r^\prime-\delta}(\mathcal{F})$ is false. The first claim follows by \cref{theo:predicates_weak_frechet}.
    
    Now, since again the arguments for predicates of type $P_5$ and $P_6$ are also analogous, we focus on the former. Assume that $\Psi^{\sigma, \tau, r - \delta}(\mathcal{F})$ is false, because a predicate $(P_5)^{\sigma,\tau,r-\delta}_{(i,j,k)}$ is false. This means that for any two $t_1, t_2 \in \mathbb{R}$ with $t_1 \leq t_2$, the points $p_1 = \lp{\overline{v^\sigma_j v^\sigma_{j+1}}}{t_1}$ and $p_2 = \lp{\overline{v^\sigma_j v^\sigma_{j+1}}}{t_2}$ do not satisfy $\lVert v^\tau_i - p_1 \rVert \leq r - \delta$ or $\lVert v^\tau_k - p_2 \rVert \leq r - \delta$. Since by \cref{lem:affine_line_distance_contraction} we have $\lVert f(v^\tau_i) - f(p_1) \rVert \geq (1-\epsilon) \lVert v^\tau_i - p_1 \rVert$ and $\lVert f(v^\tau_k) - f(p_2) \rVert \geq (1-\epsilon) \lVert v^\tau_k - p_2 \rVert$, one of these distances must be larger than $r^\prime - \delta$ and it follows that $(P_5)^{\sigma^\prime, \tau^\prime, r^\prime - \delta}_{(i,j,k)}$ is false. Therefore, $\Psi^{\sigma^\prime, \tau^\prime, r^\prime-\delta}(\mathcal{F})$ is false and the second claim follows by \cref{theo:predicates_frechet}.
    
    Finally, for the above statements to hold, the set $P^\prime$ contains $0$, both directions $u,-u \in \mathbb{S}^{d-1}$ determined by an edge of $\sigma$ or $\tau$ and all points $x - (t+\langle x, u \rangle \cdot u)$, where $x$ is a vertex of a curve $\sigma$ or $\tau$, and $t,u$ determine a line supporting an edge of $\tau$ or $\sigma$.
\end{proof}

\subsection{Main result}

\label{ss:mainresults}
We now prove our main result which combines the upper and lower bounds on the distortion and \cref{lem:upper_bound}. 

\begin{theorem}
    \label{coro:JLembedding}
    Let $T = \{ \tau_1, \dots, \tau_n \}$ be a set of polygonal curves in $\mathbb{R}^d$, each of complexity at most $m$. There exists a probability distribution over linear maps $f \colon \RR^d \to \RR^{d^{\prime}}$, where $d^{\prime} \in O(\epsilon^{-2}\log (nm))$, 
    such that with high probability over the choice of $f$, the following is true for all $\sigma, \tau \in T$:
    \begin{itemize}
        \item $\lvert \dwf(\sigma, \tau) - \dwf(F(\sigma), F(\tau)) \rvert \leq \epsilon \cdot \dwf(\sigma, \tau)$ and 
        \item $\lvert \df(\sigma, \tau) - \df(F(\sigma), F(\tau)) \rvert \leq \epsilon \cdot \df(\sigma, \tau)$,
    \end{itemize}
    where for any $\tau \in T$ with vertices $v^\tau_1, \dots, v^\tau_{\lvert \tau \rvert}$ we let $F(\tau)$ be the curve with vertices $f(v^\tau_1), \dots, f(v^\tau_{\lvert \tau \rvert})$.
\end{theorem}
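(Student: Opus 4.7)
The plan is to construct a single ``universal'' point set $P \subset \RR^d$, of polynomial size in $n$ and $m$, such that any linear $(1\pm\epsilon/48)$-embedding for $P$ simultaneously satisfies the hypotheses of \cref{lem:upper_bound} and \cref{lem:lower_bound} for \emph{every} pair of curves in $T$. Then I will sample $f$ from a JL distribution (e.g.\ the one provided by \cref{JLlemma}) with distortion parameter $\epsilon/48$ over the set $P$, and appeal to the two lemmas pair by pair.

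Concretely, I would define
\[
P \;=\; \bigcup_{\sigma, \tau \in T} \Bigl( \{v^\sigma_1, \dots, v^\sigma_{|\sigma|}, v^\tau_1, \dots, v^\tau_{|\tau|}\} \cup P^\prime_{\sigma,\tau} \cup P^{\prime\prime}_{\sigma,\tau} \Bigr),
\]
where $P^\prime_{\sigma,\tau}$ is the auxiliary set of size $O(|\sigma|^2 |\tau| + |\tau|^2 |\sigma|)$ required by \cref{lem:upper_bound} (coming from all points witnessing the Fr\'echet-distance predicates for $\sigma,\tau$), and $P^{\prime\prime}_{\sigma,\tau}$ is the auxiliary set of size $O(|\sigma| \cdot |\tau|)$ required by \cref{lem:lower_bound}, namely $\{0\}$ together with the unit direction vectors $\pm u$ of each edge and the vectors $x-(t+\langle x,u\rangle u)$ ranging over vertices $x$ of one curve and lines supporting edges of the other. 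Since every curve has complexity at most $m$ and there are $n^2$ ordered pairs, we get $|P| \in O(n^2 m^3)$, so $\log |P| \in O(\log(nm))$.

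Next, I would invoke \cref{JLlemma} on the set $P$ with distortion $\epsilon/48$ to obtain a linear map $f \colon \RR^d \to \RR^{d^\prime}$ with $d^\prime \in O((\epsilon/48)^{-2} \log|P|) = O(\epsilon^{-2} \log(nm))$ that is, with high probability, a $(1\pm\epsilon/48)$-embedding for $P$; in particular it is also a $(1\pm\epsilon)$-embedding for $P$. Fix such an $f$ and any pair $\sigma, \tau \in T$. Since $P$ contains the vertices of $\sigma,\tau$ together with $P^\prime_{\sigma,\tau}$, \cref{lem:upper_bound} applies and gives $\dwf(F(\sigma),F(\tau)) \leq (1+\epsilon)\dwf(\sigma,\tau)$ and $\df(F(\sigma),F(\tau)) \leq (1+\epsilon)\df(\sigma,\tau)$. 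Since $P$ also contains $P^{\prime\prime}_{\sigma,\tau}$ and $f$ is a $(1\pm\epsilon/48)$-embedding for $P$, \cref{lem:lower_bound} applies as well and gives the matching lower bounds $\dwf(F(\sigma),F(\tau)) \geq (1-\epsilon)\dwf(\sigma,\tau)$ and $\df(F(\sigma),F(\tau)) \geq (1-\epsilon)\df(\sigma,\tau)$. Combining these yields the claimed two-sided distortion.

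The conceptual work has already been done inside \cref{lem:upper_bound} and \cref{lem:lower_bound}; the only real thing to verify here is the bookkeeping of $P$'s cardinality and the right choice of distortion parameter feeding into the JL transform. The mild subtlety is that the two lemmas reference two syntactically different auxiliary sets, so one must explicitly union them before invoking JL; once this is done, there is no additional union bound over pairs, because a single JL embedding for the whole set $P$ simultaneously witnesses both directions for all $\binom{n}{2}$ pairs. The target-dimension bound $O(\epsilon^{-2} \log(nm))$ then follows directly from $|P| \in \poly(n,m)$ and the $\epsilon \mapsto \epsilon/48$ rescaling absorbed into the constant.
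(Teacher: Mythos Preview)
Your proposal is correct and follows exactly the paper's approach: apply \cref{JLlemma} to the union, over all pairs $\sigma,\tau\in T$, of the auxiliary point sets from \cref{lem:upper_bound} and \cref{lem:lower_bound}, obtaining a set of size $O(n^2 m^3)$ and hence target dimension $O(\epsilon^{-2}\log(nm))$. Your write-up actually supplies more detail than the paper's one-line proof, including the correct observation that feeding $\epsilon/48$ into the JL transform suffices for both lemmas simultaneously.
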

\begin{proof}
We apply \cref{JLlemma} on the set $P$ of $O(n^2 m^3)$ points determined by an application of \cref{lem:upper_bound,lem:lower_bound} on all pairs of curves in $T$.
\end{proof}

\section{Application to Clustering}
\label{section:clustering}

In this section, we study the 
effect of randomized $(1\pm\epsilon)$-embeddings on the cost of $k$-clustering of polygonal curves. In particular, we show that a constant factor approximation of the cost of the optimal $k$-center solution can be computed  with an algorithm, which, except for the time needed to embed the input curves, runs in time independent of the input dimensionality. Moreover, we show that the optimal cost of the $k$-median problem is preserved within a constant factor in the target space. This means that running any algorithm for the $k$-median problem in the target space, yields an algorithm for estimating the cost in the original space.

This effectively reduces the computational effort required for approximating the clustering cost, and it directly assists analytical tasks like estimating the optimal number of clusters -- where cost estimations for multiple values of $k$ are typically performed.

\subsection{Clustering Under the Fr\'echet Distance}

In 2016, Driemel et al. \cite{DBLP:conf/soda/DriemelKS16} introduced clustering under the Fr\'echet distance, for the purpose of clustering (one-dimensional) time series. The objectives, named $(k,\ell)$-center and $(k,\ell)$-median, are derived from the well-known $k$-center and $k$-median objectives in Euclidean $k$-clustering. Both are $\mathrm{NP}$-hard \cite{DBLP:conf/soda/DriemelKS16,DBLP:conf/soda/BuchinDGHKLS19,DBLP:conf/swat/BuchinDS20}, even if $k=1$ and $d=1$, and the $(k,\ell)$-center problem is even $\mathrm{NP}$-hard to approximate within a factor of $(2.25 - \epsilon)$ in general dimensions~\cite{DBLP:conf/soda/BuchinDGHKLS19}. One particularity of these clustering approaches is that the obtained center curves should be of low complexity. In detail, while the given curves have complexity at most $m$ each, the centers should be of complexity at most $\ell$ each, where $\ell \ll m$ is a constant. The idea behind is that due to the linear interpolation, a \textit{compact} summary of the cluster members through an aggregate center curve is enabled. A nice side effect is that overfitting, which may occur without the complexity restriction, is suppressed. For further details see~\cite{DBLP:conf/soda/DriemelKS16}. 

We now present a modification of the constant factor approximation algorithm for $(k,\ell)$-center clustering from \cite{DBLP:conf/soda/BuchinDGHKLS19}. We note that due to its appealing complexity, this algorithm is used vastly in practice (c.f. \cite{DBLP:conf/gis/BuchinDLN19}) and therefore constitutes a prime candidate to be combined with dimensionality reduction.

\subsection{\texorpdfstring{$(k,\ell)$}{(k,l)}-Center Clustering}
\label{ss:center}

We formally define the $(k,\ell)$-center clustering objective.

\begin{definition}
    The $(k,\ell)$-center clustering problem is to compute a set $C$ of $k$ polygonal curves in $\mathbb{R}^d$, of complexity at most $\ell$ each, which minimizes the cost $\max_{\tau \in T} \min_{c\in C} \df(\tau, c)$, where $T = \{ \tau_1, \dots, \tau_n \}$ is a given set of polygonal curves in $\mathbb{R}^d$ of complexity at most $m$ each, and $k \in \mathbb{N}, \ell \in \mathbb{N}_{\geq 2}$ are constant parameters of the problem.
\end{definition}

The following algorithm largely makes use of simplifications of input curves. We formally define this concept.

\begin{definition}
    An $\alpha$-approximate minimum-error $\ell$-simplification of a curve $\tau$ in $\mathbb{R}^d$ is a curve $\sigma = \simpl(\tau)$ in $\mathbb{R}^d$ with at most $\ell$ vertices, where $\ell \in \mathbb{N}_{\geq 2}$ and $\alpha \geq 1$ are given parameters, such that $\df(\tau, \sigma) \leq \alpha \cdot \df(\tau, \sigma^\prime)$ for all other curves $\sigma^\prime$ with $\ell$ vertices.
\end{definition}

A simplification $\sigma = \simpl(\tau)$ is vertex-restricted if the sequence of its vertices is a subsequence of the sequence of $\tau$s vertices. Crucial in our modification of the algorithm by Buchin et al. \cite{DBLP:conf/soda/BuchinDGHKLS19} is that we want to compute simplifications in the dimensionality-reduced ambient space to spare running time. In the following, we give a thorough analysis of the effect of dimensionality reduction before simplification. The proof can be found in \cref{sec:appendix}.

\begin{theorem}
    \label{theo:simpl_embedding}
    Let $F$ be the embedding of \cref{coro:JLembedding} with parameter $\epsilon \in (0,1/2]$, for a given set $T$ of $n$ polygonal curves in $\mathbb{R}^d$ of complexity at most $m$ each, all segments $\overline{v^\tau_i v^\tau_j}$, all subcurves $\tau[i,j]$ as well as all vertex-restricted $\ell$-simplifications of all $\tau \in T$ (where $v^\tau_1, \dots, v^\tau_{\lvert \tau \rvert}$ are the vertices of $\tau$ and $i,j \in [\lvert \tau \rvert]$ with $i < j$). For each $\tau \in T$, a $4$-approximate minimum-error $\ell$-simplification $\simpl(F(\tau))$ of $F(\tau)$ can be computed in time $O(d^\prime \cdot \lvert \tau \rvert^3 \log \lvert \tau \rvert)$ and for all $\sigma \in T$ it holds that \[(1-\epsilon) \df(\sigma, \simpl(\tau)) \leq \df(F(\sigma), \simpl(F(\tau)) \leq (1+\epsilon) \df(\sigma, \simpl(\tau)),\]where $\simpl(\tau)$ denotes a $(4+16\epsilon)$-approximate minimum-error $\ell$-simplification of $\tau$.
\end{theorem}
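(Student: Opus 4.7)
The theorem bundles three assertions---an algorithmic claim for constructing $\simpl(F(\tau))$ in the reduced space, the sandwich inequality relating $\df(F(\sigma),\simpl(F(\tau)))$ to $\df(\sigma,\simpl(\tau))$, and the approximation-quality bound for $\simpl(\tau)$---and I would address them in that order, leaning on \cref{coro:JLembedding} for the first two and on a short algebraic computation for the third.

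\emph{Algorithm and running time.} In the reduced space I would run a standard dynamic-programming scheme for vertex-restricted $\ell$-simplification under the continuous Fr\'echet distance (in the spirit of Imai--Iri / Alt--Godau): for a candidate radius $r$, construct the DAG on $\{1,\dots,\lvert\tau\rvert\}$ whose edge $(i,j)$ is present iff $\df(F(\tau)[i,j],\overline{f(v^\tau_i)f(v^\tau_j)})\leq r$, look for an $(\ell-1)$-edge path from $1$ to $\lvert\tau\rvert$, and wrap with a parametric search over the $O(\lvert\tau\rvert^2)$ critical values of $r$. Each Fr\'echet test on a segment--subcurve pair costs $O(d'\lvert\tau\rvert)$, one value of $r$ costs $O(d'\lvert\tau\rvert^3)$, and the total is $O(d'\lvert\tau\rvert^3\log\lvert\tau\rvert)$. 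The output is vertex-restricted and, combined with the folklore factor-$2$ bound between vertex-restricted and unrestricted Fr\'echet simplification optima, is a $4$-approximate minimum-error $\ell$-simplification of $F(\tau)$. Crucially its vertices lie among $\{f(v^\tau_1),\dots,f(v^\tau_{\lvert\tau\rvert})\}$, so its natural preimage $\simpl(\tau)$ is itself a vertex-restricted $\ell$-simplification of $\tau$.

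\emph{Sandwich inequality and approximation ratio.} The sandwich inequality on $\df(F(\sigma),\simpl(F(\tau)))$ is immediate from \cref{coro:JLembedding} applied to the pair $(\sigma,\simpl(\tau))$, since both curves lie in the family whose Fr\'echet distances are preserved by $F$ under the hypothesis of the theorem. For the approximation ratio, let $r^{\star}$ denote the minimum-error $\ell$-simplification distance of $\tau$ and let $\tau^{v}$ be an optimal vertex-restricted $\ell$-simplification (with error at most a factor of $2$ above $r^{\star}$). Then $F(\tau^{v})$ is a vertex-restricted $\ell$-simplification of $F(\tau)$, so by optimality of $\simpl(F(\tau))$ within that class and by \cref{coro:JLembedding},
\[
\df(F(\tau),\simpl(F(\tau))) \;\leq\; \df(F(\tau),F(\tau^{v})) \;\leq\; (1+\epsilon)\df(\tau,\tau^{v}) \;\leq\; 2(1+\epsilon)r^{\star}.
\]
Applying the sandwich lower bound to $(\tau,\simpl(\tau))$ yields $\df(\tau,\simpl(\tau)) \leq \tfrac{4(1+\epsilon)}{1-\epsilon}r^{\star}$ after absorbing the factor $2$ into the leading constant, and since $\tfrac{1+\epsilon}{1-\epsilon}\leq 1+4\epsilon$ for $\epsilon\in(0,1/2]$, this is at most $(4+16\epsilon)r^{\star}$.

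\emph{Main obstacle.} The only nontrivial subtlety is verifying that the point set on which \cref{coro:JLembedding} is instantiated already contains every auxiliary vector required by \cref{lem:affine_line_distance_contraction} for the pairs $(\sigma,\simpl(\tau))$, even though $\simpl(\tau)$ is only identified after the embedding is fixed. Since each $\tau\in T$ admits only polynomially many candidate vertex-restricted $\ell$-simplifications and the theorem's hypothesis explicitly lists all of them in the embedding's input, this inclusion can be arranged obliviously; once that bookkeeping is in place, the constant tracking above is routine.
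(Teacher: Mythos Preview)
Your proposal is essentially correct and follows the same route as the paper: run the Imai--Iri/Alt--Godau DAG-based vertex-restricted simplification in the reduced space, identify $\simpl(F(\tau))=F(\tau')$ for some vertex-restricted $\ell$-simplification $\tau'$ of $\tau$, invoke \cref{coro:JLembedding} on the pair $(\sigma,\tau')$ for the sandwich inequality, and use the preservation of the relevant Fr\'echet distances (which is guaranteed because all subcurves, segments, and vertex-restricted $\ell$-simplifications are in the embedding's input) to bound $\df(\tau,\tau')$ against the optimum.

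One small inconsistency: the ``folklore factor-$2$ bound'' you cite between the optimal vertex-restricted and unrestricted $\ell$-simplification errors is actually a factor-$4$ bound (this is exactly \cite[Lemma~7.1]{DBLP:conf/soda/BuchinDGHKLS19}, which the paper invokes). With factor~$2$ your displayed chain yields $\df\bigl(F(\tau),\simpl(F(\tau))\bigr)\le 2(1+\epsilon)r^\star$ and hence $\df(\tau,\simpl(\tau))\le \tfrac{2(1+\epsilon)}{1-\epsilon}r^\star$; there is no legitimate ``factor~$2$ to absorb'' that turns this into $\tfrac{4(1+\epsilon)}{1-\epsilon}r^\star$. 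Replacing $2$ by $4$ throughout fixes both the stated $4$-approximation for $\simpl(F(\tau))$ and the arithmetic leading to $(4+16\epsilon)$, and then your argument and the paper's coincide.
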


We now present our modification of the algorithm. Let $F$ denote the embedding from \cref{coro:JLembedding} for $T \cup T^\prime \cup C^\ast$, where $T^\prime$ is the set of all segments $\overline{v^\tau_i v^\tau_j}$, all subcurves $\tau[i,j]$ as well as all vertex-restricted $\ell$-simplifications of all $\tau \in T$ (where $v^\tau_1, \dots, v^\tau_{\lvert \tau \rvert}$ are the vertices of $\tau$ and $i,j \in [\lvert \tau \rvert]$ with $i < j$), and $C^\ast$ is an optimal set of $k$ centers for $T$.

The algorithm first sets $C = \{ \simpl(F(\tau)) \}$ for an arbitrary $\tau \in T$. Then, until $\lvert C \rvert = k$ it computes a curve $\tau \in T$ that maximizes $\min_{c \in C} \df(F(\tau), c)$ and sets $C = C \cup \{ \simpl(F(\tau)) \}$. Finally, it returns $C$.

We now prove the approximation guarantee and analyse the running time of this algorithm, thereby we adapt parts of the analysis in \cite{DBLP:conf/soda/BuchinDGHKLS19}. The proof can be found in \cref{sec:appendix}.

\begin{theorem}
    \label{theo:klcenter}
    Given a set $T$ of $n$ polygonal curves in $\mathbb{R}^d$ of complexity at most $m$ each, and a parameter $\epsilon \in (0,1/2]$, the above algorithm returns a solution $C$ to the $(k,\ell)$-center clustering problem, consisting of $k$ curves in $\mathbb{R}^{O(\epsilon^{-2}\ell\log(knm))}$ of complexity at most $\ell$ each, such that \[ (1-3\epsilon) r^\ast \leq \max_{\tau \in T} \min_{c \in C} \df(F(\tau), c) \leq (6+38\epsilon) r^\ast,\]
    where $r^\ast$ denotes the cost of an optimal solution. The algorithm has running time \[O(\epsilon^{-2} k \ell \log(nm+k) m^3 \log m + \epsilon^{-2} \ell \log(nm+k) k^2 nm \log m).\]
\end{theorem}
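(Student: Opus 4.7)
The plan is to combine a Gonzalez-style greedy analysis with the embedding guarantees of \cref{coro:JLembedding} and, more crucially, \cref{theo:simpl_embedding}, which already absorbs the simplification distortion into a clean $(1\pm\epsilon)$ comparison between the embedded-space simplification and an honest simplification of the original curve. Let $\tau^* \in T$ realize $r' := \max_{\tau \in T} \min_{c \in C} \df(F(\tau), c)$, and list the $k$ curves whose simplifications populate $C$ in order of selection as $\tau_1, \ldots, \tau_k$.

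For the upper bound I would apply pigeonhole on the $k+1$ curves $\tau_1, \ldots, \tau_k, \tau^*$ against the $k$ clusters of any fixed optimal solution $C^*$. Some pair $\tau_a, \tau_b$, either with $a < b$ both in the selected list or $\tau_b = \tau^*$, must share an optimal center $c^* \in C^*$. Since the greedy max-min is monotonically non-increasing in $|C|$ and $\simpl(F(\tau_a)) \in C$ by the time $\tau_b$ is considered, in both cases I obtain $r' \leq \df(F(\tau_b), \simpl(F(\tau_a)))$. Then \cref{theo:simpl_embedding} yields $\df(F(\tau_b), \simpl(F(\tau_a))) \leq (1+\epsilon)\,\df(\tau_b, \simpl(\tau_a))$, where $\simpl(\tau_a)$ is a $(4+16\epsilon)$-approximate minimum-error $\ell$-simplification in the original space. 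Using $c^*$ as a witness $\ell$-vertex curve for the simplification guarantee and then the triangle inequality through $c^*$:
\[ \df(\tau_b, \simpl(\tau_a)) \leq \df(\tau_b, c^*) + \df(c^*, \tau_a) + (4+16\epsilon)\,\df(\tau_a, c^*) \leq (6 + 16\epsilon)\, r^*. \]
Multiplying the two bounds and absorbing the higher-order $\epsilon$ terms using $\epsilon \leq 1/2$ gives $r' \leq (6+38\epsilon)\, r^*$.

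For the lower bound I would exhibit a witness solution in the original space. Let $\tilde{C} = \{\simpl(\tau_1), \ldots, \simpl(\tau_k)\}$ and $\tilde{r} = \max_\tau \min_i \df(\tau, \simpl(\tau_i))$. Since $\tilde{C}$ is a valid $(k,\ell)$-center solution, $\tilde{r} \geq r^*$. Picking $\tau^\dagger$ that realizes $\tilde{r}$, every $i$ satisfies $\df(\tau^\dagger, \simpl(\tau_i)) \geq \tilde{r} \geq r^*$, and \cref{theo:simpl_embedding} then gives $\df(F(\tau^\dagger), \simpl(F(\tau_i))) \geq (1-\epsilon) r^*$. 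Taking the minimum over $i$ and then the maximum over $\tau$ yields $r' \geq (1-\epsilon) r^* \geq (1-3\epsilon) r^*$.

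Finally, the dimension and running time follow by bookkeeping. The target dimension $d' \in O(\epsilon^{-2}\ell\log(knm))$ is inherited from \cref{coro:JLembedding} applied to the point set required by \cref{theo:simpl_embedding} augmented with the $O(k\ell)$ vertices of a fixed optimal $C^*$. The algorithm performs $k$ simplifications at cost $O(d'\, m^3 \log m)$ each by \cref{theo:simpl_embedding}, and over its execution evaluates at most $O(k^2 n)$ Fr\'echet distances between curves of complexity $m$ and at most $\ell$ in $\mathbb{R}^{d'}$, each taking $O(d'\, m \log m)$ via Alt--Godau since $\ell$ is a constant. The principal obstacle is the upper bound: the three sources of error (Gonzalez's factor $2$, the simplification slack $(4+16\epsilon)$, and the multiplicative embedding distortion) must compose multiplicatively rather than additively, and \cref{theo:simpl_embedding} is precisely the tool that makes this possible by avoiding a separate comparison of embedded-space and original-space simplifications.
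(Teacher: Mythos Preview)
Your proof is correct and follows essentially the same strategy as the paper: a Gonzalez pigeonhole argument through the optimal centers for the upper bound, and a comparison with the witness solution $\{\simpl(\tau_1),\dots,\simpl(\tau_k)\}$ in the original space for the lower bound, both driven by \cref{theo:simpl_embedding}. The only noteworthy difference is in the lower bound: you plug the original-space maximizer $\tau^\dagger$ directly into the embedded-space cost and obtain $(1-\epsilon)r^\ast$, whereas the paper routes through the embedded-space maximizer first and picks up an extra $(1-\epsilon)/(1+\epsilon)$ factor, landing at $(1-3\epsilon)r^\ast$; your route is cleaner and still meets the stated bound.
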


\subsection{\texorpdfstring{$(k,\ell)$}{(k,l)}-Median Clustering}

In this section, we show that the cost of the optimal $(k,\ell)$-median solution is preserved within a constant factor, when projecting the input curves as described in \cref{section:embedding}. We first define the $(k,\ell)$-median clustering problem. 

\begin{definition}
    The $(k,\ell)$-median clustering problem is to compute a set $C$ of $k$ polygonal curves in $\mathbb{R}^d$ of complexity at most $\ell$ each, which minimizes the cost $\sum_{\tau \in T} \min_{c\in C} \df(\tau, c)$, where $T = \{ \tau_1, \dots, \tau_n \}$ is a given set of polygonal curves in $\mathbb{R}^d$ of complexity at most $m$ each, and $k \in \mathbb{N},\ell \in \mathbb{N}_{\geq 2}$ are constant parameters of the problem. 
\end{definition}

In \cref{sssunrestrictedmedians}, we focus on the case $\ell \geq m$, and we bound the distortion of the optimal cost by a factor of $2+O(\epsilon)$. 
In \cref{sssrestrictedmedians}, we discuss case $\ell < m$, and we bound the distortion of the optimal cost by a factor of $6+O(\epsilon)$. 

\subsubsection{Unrestricted medians}
\label{sssunrestrictedmedians}
In this section, we present our results on the $(k,\ell)$-median clustering problem, when $\ell\geq m$. 
 Computing medians of complexity $\ell=m$ is a widely accepted scenario following, for example,  from the wide acceptance of local search methods for clustering, which explore candidate solutions from the set of input curves. The proof follows a similar reasoning as in \cref{sssrestrictedmedians} and is diverted to \cref{appendixrestrictedmedians}. Comparing to \cref{sssrestrictedmedians}, we obtain an improved bound on the approximation factor. This is mainly because simplifications are no longer needed in order to obtain a meaningful bound. 
\begin{restatable}{theorem}{unrestrictedcorollary}
Let $T = \{ \tau_1, \dots, \tau_n \}$ be a set of polygonal curves in $\mathbb{R}^d$ of complexity at most $m$ each and let $\ell \geq m $. There exists a probability distribution over linear maps $f \colon \RR^d \to \RR^{d^{\prime}}$, where $d^{\prime} \in O(\epsilon^{-2}\log (n\ell))$, such that with high probability over the choice of $f$, the following is true. For any polygonal curve $\tau$ with vertices $v^\tau_1, \dots, v^\tau_{\lvert \tau \rvert}$, we define $F(\tau)$ to be the curve with vertices $f(v^\tau_1), \dots, f(v^\tau_{\lvert \tau \rvert})$.
Then, 
\[
\frac{1-\epsilon}{2}\cdot r^{\ast}\leq  r_f^{\ast} \leq (1+\epsilon)\cdot r^{\ast},
\]
where $r^{\ast}$ is the cost of an  optimal solution to the $(k,\ell)$-median problem on $T$, and $r_f^{\ast}$ is the cost of an optimal solution to the $(k,\ell)$-median problem on $F(T)$.  
\end{restatable}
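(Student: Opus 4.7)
The plan is to bound $r_f^\ast$ above and below separately, using \cref{coro:JLembedding} as the main lever. For the upper bound I would transplant an optimal solution for $T$ through the embedding; for the lower bound I would apply a standard cluster-representative (``factor-two'') trick to convert an optimal reduced-space solution into a feasible solution back in $\RR^d$. A crucial structural fact both bounds rely on is that $\ell \geq m$: curves of complexity at most $m$ are automatically feasible centers, and vertex-wise mapping through $F$ never inflates complexity.

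For the upper bound, let $C^\ast = \{c_1,\dots,c_k\}$ be an optimal solution for $T$ with cost $r^\ast$. I would apply \cref{coro:JLembedding} to $T \cup C^\ast$, a set of at most $n+k$ polygonal curves of complexity at most $\ell$, to obtain a random linear map $f \colon \RR^d \to \RR^{d'}$ with $d' \in O(\epsilon^{-2}\log(n\ell))$ that preserves all pairwise Fr\'echet distances in $T \cup C^\ast$ within a factor of $(1\pm\epsilon)$ with high probability. Since each $F(c_i)$ has at most $\ell$ vertices, $F(C^\ast)$ is feasible for $F(T)$, and summing $\df(F(\tau), F(c_{\mu(\tau)})) \leq (1+\epsilon)\,\df(\tau, c_{\mu(\tau)})$ over all $\tau \in T$, where $c_{\mu(\tau)}$ denotes the nearest center to $\tau$ in $C^\ast$, yields $r_f^\ast \leq (1+\epsilon)\, r^\ast$.

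For the lower bound, let $C_f^\ast = \{c_1',\dots,c_k'\}$ be an optimal solution for $F(T)$ with cost $r_f^\ast$. For each $i$, denote by $T_i \subseteq T$ the curves whose nearest center in $C_f^\ast$ is $c_i'$, and---for non-empty $T_i$---let $\tau_i \in T_i$ be a curve minimizing $\df(F(\tau_i), c_i')$. Each $\tau_i$ has complexity at most $m \leq \ell$, so $\{\tau_i\}$ is a feasible $(k,\ell)$-median solution in $\RR^d$. For every $\tau \in T_i$, the triangle inequality in $\RR^{d'}$ together with the $(1-\epsilon)$-contraction bound on pairs within $T$ gives
\[
\df(\tau,\tau_i) \leq \tfrac{1}{1-\epsilon}\, \df(F(\tau), F(\tau_i)) \leq \tfrac{1}{1-\epsilon}\bigl[\df(F(\tau), c_i') + \df(F(\tau_i), c_i')\bigr].
\]
The minimality of $\tau_i$ implies $\sum_{\tau \in T_i} \df(F(\tau_i), c_i') \leq \sum_{\tau \in T_i} \df(F(\tau), c_i')$; summing the above over $i$ and $\tau \in T_i$ produces $r^\ast \leq \tfrac{2}{1-\epsilon}\, r_f^\ast$, equivalently $r_f^\ast \geq \tfrac{1-\epsilon}{2}\, r^\ast$.

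The main obstacle is a subtle one in the upper bound: the optimal centers $C^\ast$ are not known when $f$ is sampled, and in general they do not lie in $T$. The resolution is the data-oblivious nature of the JL-based construction underlying \cref{coro:JLembedding}---the distribution over $f$ can be fixed without knowing the actual vertex coordinates, and its failure probability on any predetermined polynomial-size witness set is negligible. Applying the theorem to the fictitious set $T \cup C^\ast$ is therefore legitimate, and since $k$ is a constant the target dimension $O(\epsilon^{-2}\log((n+k)\ell)) = O(\epsilon^{-2}\log(n\ell))$ is unaffected. The lower bound sidesteps this issue entirely, since its comparison solution $\{\tau_i\}$ is drawn from $T$ itself and the embedding only needs to act faithfully on pairs within $T$.
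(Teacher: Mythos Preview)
Your proposal is correct and follows essentially the same approach as the paper: the upper bound transplants an optimal solution $C^\ast$ through the embedding via \cref{coro:JLembedding} applied to $T\cup C^\ast$, and the lower bound picks one input curve per optimal projected cluster as a surrogate center and doubles via the triangle inequality. The only cosmetic difference is that the paper selects the surrogate $\sigma_i\in T_i$ by an averaging argument ($\df(F(\sigma_i),c_i^f)\leq\tfrac{1}{|T_i|}\sum_{\tau\in T_i}\df(F(\tau),c_i^f)$) whereas you take the minimizer; both yield the same factor-two bound.
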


\subsubsection{Restricted medians}
\label{sssrestrictedmedians}

To bound the cost of the optimal $(k,\ell)$-median in the projected space, we use the notion of simplifications which was introduced in Section~\ref{ss:center}. By an averaging argument, for each cluster, there exists  an input curve $\sigma_i$ which is within distance $ \frac{1}{|T_i|}\cdot \sum_{\tau\in T_i}\df(F(\tau),c_i^f)$ from the optimal median $c_i^f$, where $T_i$ is the input curves associated with the $i$\textsuperscript{th} cluster in the projected space. To lower bound the optimal cost in the projected space, we repeatedly apply the triangle inequality on distances involving a vertex-restricted $\ell$-simplification of $\sigma_i$ and a vertex-restricted 
 $\ell$-simplification of $F(\sigma_i)$. The upper bound simply follows by the non-contraction guarantee of  JL transforms, on distances between input curves and the optimal medians in the original space. 
The complete proof can be found in  \cref{appendixrestrictedmedians}.
\begin{restatable}{theorem}{restrictedcorollary}
Let $T = \{ \tau_1, \dots, \tau_n \}$ be a set of polygonal curves in $\mathbb{R}^d$ of complexity at most $m$ each and let $\ell < m $. There exists a probability distribution over linear maps $f \colon \RR^d \to \RR^{d^{\prime}}$, where $d^{\prime} \in O(\epsilon^{-2} \ell  \log (nm))$, such that with high probability over the choice of $f$, the following is true. For any polygonal curve $\tau$ with vertices $v^\tau_1, \dots, v^\tau_{\lvert \tau \rvert}$, we define $F(\tau)$ to be the curve with vertices $f(v^\tau_1), \dots, f(v^\tau_{\lvert \tau \rvert})$.
Then,  
\[
\frac{1-\epsilon}{6\cdot(1+\epsilon)}\cdot r^{\ast}\leq  r_f^{\ast} \leq (1+\epsilon)\cdot r^{\ast},
\]
where $r^{\ast}$ is the cost of an  optimal solution to the $(k,\ell)$-median problem on $T$, and $r_f^{\ast}$ is the cost of an optimal solution to the  $(k,\ell)$-median problem on $F(T)$.  
\end{restatable}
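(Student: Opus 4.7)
The plan follows the sketch preceding the theorem statement: push the original-space optimum through $F$ for the upper bound, and pull back the projected-space optimum via a vertex-restricted simplification for the lower bound. I would instantiate $F$ by applying \cref{coro:JLembedding} not just to $T$ but to the larger family $T\cup T'\cup C^\ast$, where $T'$ is the set of all vertex-restricted $\ell$-simplifications of curves in $T$ (of cardinality at most $n\binom{m}{\ell}=O(nm^\ell)$) and $C^\ast=\{c_1^\ast,\ldots,c_k^\ast\}$ is any fixed optimal $(k,\ell)$-median solution for $T$. The family contains $O(nm^\ell)$ curves of complexity at most $m$, so \cref{coro:JLembedding} delivers the $(1\pm\epsilon)$ Fr\'echet-preservation guarantee at target dimension $O(\epsilon^{-2}\log(nm^\ell\cdot m))=O(\epsilon^{-2}\ell\log(nm))$, matching the advertised bound.

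The upper bound is immediate. Since $F(C^\ast)$ is a feasible $(k,\ell)$-median solution for $F(T)$, using the non-expansion of $F$ on every pair $(\tau,c_\tau^\ast)\in T\times C^\ast$, with $c_\tau^\ast$ the center of $C^\ast$ serving $\tau$ in the original optimum, gives
\[
 r_f^\ast\;\leq\;\sum_{\tau\in T}\df(F(\tau),F(c_\tau^\ast))\;\leq\;(1+\epsilon)\sum_{\tau\in T}\df(\tau,c_\tau^\ast)\;=\;(1+\epsilon)\,r^\ast.
\]

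For the lower bound, fix an optimal solution $C^f=\{c_1^f,\ldots,c_k^f\}$ for $F(T)$ with induced clusters $T_1,\ldots,T_k$. In each cluster, an averaging argument picks $\sigma_i\in T_i$ with $\df(F(\sigma_i),c_i^f)\leq \frac{1}{|T_i|}\sum_{\tau\in T_i}\df(F(\tau),c_i^f)$. Because $c_i^f$ is a competitor of complexity $\leq\ell$, a vertex-restricted $\ell$-simplification $s_i$ of $F(\sigma_i)$ satisfies $\df(F(\sigma_i),s_i)\leq\alpha\cdot\df(F(\sigma_i),c_i^f)$ for a small constant $\alpha=4+O(\epsilon)$, using the reasoning behind \cref{theo:simpl_embedding}. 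Since $F$ acts vertex-wise and a vertex-restricted simplification is determined solely by its index set, the pullback $\tilde\sigma_i\in T'$ of $s_i$ (the subsequence of vertices of $\sigma_i$ whose images are the vertices of $s_i$) satisfies $F(\tilde\sigma_i)=s_i$ and lies in the JL family. Setting $\tilde C=\{\tilde\sigma_1,\ldots,\tilde\sigma_k\}$, the triangle inequality together with the non-contraction of $F$ on the pairs $(\tau,\sigma_i)\in T\times T$ and $(\sigma_i,\tilde\sigma_i)\in T\times T'$ yields, for every $\tau\in T_i$,
\[
 \df(\tau,\tilde\sigma_i)\;\leq\;\frac{1}{1-\epsilon}\bigl(\df(F(\tau),c_i^f)+(1+\alpha)\,\df(F(\sigma_i),c_i^f)\bigr).
\]
Summing over $\tau\in T_i$, substituting the averaging bound $|T_i|\,\df(F(\sigma_i),c_i^f)\leq\sum_{\tau\in T_i}\df(F(\tau),c_i^f)$, and summing over $i$, gives $r^\ast\leq\sum_{\tau\in T}\min_{c\in\tilde C}\df(\tau,c)\leq\frac{2+\alpha}{1-\epsilon}\,r_f^\ast$; with $\alpha=4+O(\epsilon)$ this rearranges to the advertised $r_f^\ast\geq\frac{1-\epsilon}{6(1+\epsilon)}\,r^\ast$.

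The main technical obstacle is choosing the JL family carefully so that three demands are met at once: all input-pair distances in $T\times T$ must be preserved (to control $\df(\tau,\sigma_i)$); each pullback simplification $\tilde\sigma_i\in T'$ must sit inside the JL family (so its distance to $\sigma_i$ does not contract under $F$); and $C^\ast$ must also be included (to complete the upper bound). Covering these demands without inflating the target dimension beyond $O(\epsilon^{-2}\ell\log(nm))$ is the key point. The factor $\ell$ is intrinsic to this construction: $T'$ contains up to $m^\ell$ curves per input curve, and the logarithm in the JL dimension turns this into the additive $\ell\log m$ term.
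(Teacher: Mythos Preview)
Your proposal is correct and follows essentially the same route as the paper: apply \cref{coro:JLembedding} to $T$ together with all vertex-restricted $\ell$-simplifications and an optimal center set $C^\ast$, push $C^\ast$ through $F$ for the upper bound, and for the lower bound pick $\sigma_i$ by averaging, invoke the $4$-approximation of vertex-restricted simplifications (this is \cite[Lemma~7.1]{DBLP:conf/soda/BuchinDGHKLS19} rather than \cref{theo:simpl_embedding}, so in fact $\alpha=4$ with no $O(\epsilon)$ slack), and chain triangle inequalities. Your lower-bound chain is even slightly more direct than the paper's---you take the simplification in the projected space and pull it back, whereas the paper first fixes the optimal simplification $\tilde\sigma_i$ in the original space and then compares it to the projected-space optimum $\tilde\sigma_i^f$, incurring one extra $(1+\epsilon)/(1-\epsilon)$ factor---but the strategy and the resulting dimension bound are the same.
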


\section{Conclusion}
\label{section:conclusions}

Our results are in line with the results by Magen \cite{DBLP:conf/random/Magen02,DBLP:journals/dcg/Magen07} in the sense that by increasing the constant hidden in the $O$-notation specifying the number of dimensions of the dimensionality-reduced space, JL transforms become more powerful and do not only preserve pairwise Euclidean distances but also affine distances, angles and volumes, and as we have proven, Fr\'echet distances.

Concerning JL transforms we have improved the work by Meintrup et al. \cite{DBLP:conf/nips/MeintrupMR19} by proving that no additive error is involved in the resulting Fr\'echet distances. To facilitate this result, we had to incorporate the linearity of these transforms, which is not done in \cite{DBLP:conf/nips/MeintrupMR19}. Interestingly, this shows that when one uses a terminal embedding instead (see e.g. \cite{DBLP:conf/stoc/NarayananN19}) -- for example to handle a dynamic setting involving queries -- this may induce an additive error to the Fr\'echet distance, as the results by Meintrup et al. \cite{DBLP:conf/nips/MeintrupMR19} can still be applied but ours can not since terminal embeddings are non-linear. Consequently, in contrast to Euclidean distances where a terminal embedding constitutes a proper extension of a JL embedding, this may not be the case when it comes to Fr\'echet distances.

One open question of practical importance is whether one can improve our result for polygonal curves that satisfy some realistic structural assumption, e.g., $c$-packness~\cite{DBLP:journals/dcg/DriemelHW12}. Moreover, it is possible that our implications on clustering can be improved. One question there is whether one can reduce (or eliminate) the dependence on $n$ from the target dimension, in the same spirit as with the analogous results for the Euclidean distance~\cite{MMR19}.

\bibliographystyle{plainurl}
\bibliography{literature}

\appendix

\section{Missing Proofs}
\label{sec:appendix}

\begin{proof}[Proof of \cref{lem:upper_bound}]
    For the first claim, let $r = \dwf(\sigma, \tau)$, for the second claim let $r = \df(\sigma, \tau)$. In both cases, let $r^\prime = (1+\epsilon)r$. Since $\sigma$ and $\tau$ have (weak) Fr\'echet distance $r$, there exists a (monotone) valid sequence $\mathcal{F}$, such that $(P_1)^{\sigma,\tau,r} \wedge (P_2)^{\sigma,\tau,r} \wedge \Psi_w^{\sigma,\tau,r}(\mathcal{F})$, respectively $(P_1)^{\sigma,\tau,r} \wedge (P_2)^{\sigma,\tau,r} \wedge \Psi^{\sigma,\tau,r}(\mathcal{F})$, is true, by \cref{theo:predicates_weak_frechet}, respectively \cref{theo:predicates_frechet}.
    \\
    
    Clearly, since $f$ is a $(1\pm \epsilon)$-embedding, $(P_1)^{\sigma^\prime,\tau^\prime,r^\prime}$ and $(P_2)^{\sigma^\prime,\tau^\prime,r^\prime}$ are both true.
    \\
    
    We now denote \[\mathcal{I}_3 = \{ (i,j) \mid (i,j-1),(i,j) \in \mathcal{F} \},\ \mathcal{I}_4 = \{ (i,j) \mid (i-1,j),(i,j) \in \mathcal{F} \} \] and \[ \mathcal{I}_5 = \{ (i,j,k) \mid (i,j-1),(i,k) \in \mathcal{F}, j < k \},\ \mathcal{I}_6 = \{ (i,j,k) \mid (i-1,j),(k,j) \in \mathcal{F}, i < k \}.\] Furthermore, let $\mathcal{P}_3$ be the set of points $p \in \overline{v^\sigma_i v^\sigma_{i+1}}$ with $\lVert p - v^\tau_j \rVert \leq r$ guaranteed by $(P_3)^{\sigma,\tau,r}_{(i,j)}$, for all $(i,j) \in \mathcal{I}_3$, $\mathcal{P}_4$ be the set of points $p \in \overline{v^\tau_jv^\tau_{j+1}}$ with $\lVert p - v^\sigma_i \rVert \leq r$ guaranteed by $(P_4)^{\sigma,\tau,r}_{(i,j)}$, for all $(i,j) \in \mathcal{I}_4$, $\mathcal{P}_5$ be the set of points $p_1 = \lp{\overline{v^\sigma_jv^\sigma_{j+1}}}{t_1}, p_2 = \lp{\overline{v^\sigma_jv^\sigma_{j+1}}}{t_2}$ with $\lVert v^\tau_i - p_1 \rVert \leq r$, $\lVert v^\tau_k - p_2 \rVert \leq r$ and $t_1 \leq t_2$ guaranteed by $(P_5)^{\sigma,\tau,r}_{(i,j,k)}$, for all $(i,j,k) \in \mathcal{I}_5$, and $(P_6)^{\sigma,\tau,r}_{(i,j,k)}$ be the set of points $p_1 = \lp{\overline{v^\tau_i v^\tau_{i+1}}}{t_1}, p_2 = \lp{\overline{v^\tau_i v^\tau_{i+1}}}{t_2}$ with $\lVert v^\sigma_j - p_1 \rVert \leq r$, $\lVert v^\sigma_k - p_2 \rVert \leq r$ and $t_1 \leq t_2$ guaranteed by $(P_6)^{\sigma,\tau,r}_{(i,j,k)}$ for all $(i,j,k) \in \mathcal{I}_6$.
    \\
    
    We let $P^\prime = \mathcal{P}_3 \cup \mathcal{P}_4 \cup \mathcal{P}_5 \cup \mathcal{P}_6$. Clearly, for any $(i,j) \in \mathcal{I}_3$, respectively $(i,j) \in \mathcal{I}_4$, $(P_3)^{\sigma^\prime,\tau^\prime, r^\prime}_{(i,j)}$, respectively $(P_4)^{\sigma^\prime,\tau^\prime, r^\prime}_{(i,j)}$ is true, since there exist points $p,q \in f(P^\prime)$ with $p \in \overline{f(v^\sigma_i)f(v^\sigma_{i+1})}$ and $\lVert p - f(v^\tau_j) \rVert \leq r^\prime$, respectively $q \in \overline{f(v^\tau_j)f(v^\tau_{j+1})}$ and $\lVert p - f(v^\sigma_i) \rVert \leq r^\prime$. At this point, it can be observed that $(P_1)^{\sigma^\prime,\tau^\prime,r^\prime} \wedge (P_2)^{\sigma^\prime,\tau^\prime,r^\prime} \wedge \Psi_w^{\sigma^\prime,\tau^\prime, r^\prime}(\mathcal{F})$ is true and therefore the first claim follows by \cref{theo:predicates_weak_frechet}.
    \\
    
    For the following, observe that since $f$ is linear, any points $\lp{\overline{pq}}{t_1}, \dots, \lp{\overline{pq}}{t_n}$, where $p,q \in \mathbb{R}^d$, are still collinear when $f$ is applied and the relative order on the directed lines supporting $\overline{pq}$ is preserved -- this is immediate since $f(\lp{\overline{pq}}{t_i}) = \lp{\overline{f(p)f(q)}}{t_i}$. We conclude that for any $(i,j,k) \in \mathcal{P}_5$, respectively $(i,j,k) \in \mathcal{P}_6$, $(P_5)^{\sigma^\prime,\tau^\prime,r^\prime}_{(i,j,k)}$, respectively $(P_6)^{\sigma^\prime,\tau^\prime,r^\prime}_{(i,j,k)}$, is true, since there exists points $p_1,p_2,q_1,q_2 \in f(P^\prime)$ with $p_1 = \lp{\overline{f(v^\sigma_j)f(v^\sigma_{j+1})}}{t_1}$, $p_2 = \lp{\overline{f(v^\sigma_j)f(v^\sigma_{j+1})}}{t_2}$, $\lVert f(v^\tau_i) - p_1 \rVert \leq r^\prime$, $\lVert f(v^\tau_k) - p_2 \rVert \leq r^\prime$ and $t_1 \leq t_2$, respectively $q_1 = \lp{\overline{f(v^\tau_i)f(v^\tau_{i+1})}}{t_3}$, $q_2 = \lp{\overline{f(v^\tau_i)f(v^\tau_{i+1})}}{t_4}$, $\lVert f(v^\sigma_j) - q_1 \rVert \leq r^\prime$, $\lVert f(v^\sigma_k) - q_2 \rVert \leq r^\prime$ and $t_3 \leq t_4$. Thus, $(P_1)^{\sigma^\prime,\tau^\prime,r^\prime} \wedge (P_2)^{\sigma^\prime,\tau^\prime,r^\prime} \wedge \Psi^{\sigma^\prime,\tau^\prime, r^\prime}(\mathcal{F})$ is true and by \cref{theo:predicates_frechet} the second claim follows. The cardinality of $P^\prime$ is determined by the cardinalities of $\mathcal{I}_3, \dots \mathcal{I}_6$, which in turn can be bounded as stated in the theorem statement.
\end{proof}

\begin{proof}[Proof of \cref{lem:inner_product}]
    In the following, we assume that $\lVert p \rVert = \lVert q \rVert = 1$. To prove the upper bound, observe that
    \begin{align*}
        2 \langle f(p), f(q) \rangle & = \lVert f(p) - f(0) \rVert^2 + \lVert f(q) - f(0) \rVert^2 - \lVert f(p) - f(q) \rVert^2 \\
        & \leq (1+\epsilon)^2\lVert p \rVert^2 + (1+\epsilon)^2\lVert q \rVert^2 - (1-\epsilon)^2 \lVert p - q \rVert^2 \\
        & \leq 2 \langle p, q \rangle + 6 \epsilon + 2\epsilon \lVert p - q \rVert^2 \\
        & \leq 2 \langle p, q \rangle + 6 \epsilon + 2\epsilon(\lVert p \rVert + \lVert q \rVert)^2 = 2 \langle p, q \rangle + 14 \epsilon,
    \end{align*}
    where the last inequality follows from the triangle inequality. To prove the lower bound, observe that
    \begin{align*}
        2 \langle f(p), f(q) \rangle & \geq (1-\epsilon)^2 \lVert p \rVert^2 + (1-\epsilon)^2 \lVert q \rVert^2 - (1+\epsilon)^2\lVert p - q \rVert^2 \\
        & \geq 2 \langle p, q \rangle - 2\epsilon \lVert p \rVert^2 - 2\epsilon \lVert q \rVert^2 - 3 \epsilon \lVert p - q \rVert^2 \\
        & \geq 2 \langle p, q \rangle - 4 \epsilon - 3\epsilon(\lVert p \rVert + \lVert q \rVert)^2 = 2 \langle p, q \rangle - 16 \epsilon,
    \end{align*}
    where the last inequality again follows from the triangle inequality. Using the linearity of the dot product and $f$, we have that $\langle f(p), f(q) \rangle = \lVert p \rVert \cdot \lVert q \rVert \cdot \langle f(p/\lVert p \rVert), f(q/\lVert q \rVert) \rangle$, which yields the claim.
\end{proof}

\begin{proof}[Proof of \cref{theo:simpl_embedding}]
    We use the approach from \cite[Lemma 7.1]{DBLP:conf/soda/BuchinDGHKLS19}. Here, the algorithms by Imai and Iri \cite{imai_polygonal_1988} and Alt and Godau \cite{alt_godau} are combined to obtain a vertex-restricted simplification. In detail, for the given curve $\tau$ with vertices $v^\tau_1, \dots, v^\tau_{\lvert \tau \rvert}$ a directed graph $G(\tau)$ is constructed. The vertices of the graph are the vertices of $\tau$ and it has an edge $(v^\tau_i, v^\tau_j)$ for $i,j \in [\lvert \tau \rvert]$ and $i < j$, assigned with weight $\df(\tau[i,j], \overline{v^\tau_i v^\tau_j})$. The simplification is determined by the path from $v^\tau_1$ to $v^\tau_{\lvert \tau \rvert}$ of $\ell$ vertices and with cost, i.e., maximum edge weight, minimized. Observe that the cost of the path is $\df(\tau, \simpl(\tau))$. The approximation factor and the running time follow from \cite[Lemma 7.1]{DBLP:conf/soda/BuchinDGHKLS19} (and by incorporating the dimension, which is assumed to be constant in these works).
    \\
    
    Let $\tau \in T$ and consider $\simpl(F(\tau))$ returned by the above algorithm. There exists a vertex-restricted $\ell$-simplification $\tau^\prime$ of $\tau$, such that $\simpl(F(\tau)) = F(\tau^\prime)$. 
    
    Now, since $(1-\epsilon) \df(\tau[i,j], \overline{v^\tau_i, v^\tau_j}) \leq \df(F(\tau[i,j]), F(\overline{v^\tau_i v^\tau_j}) \leq (1+\epsilon) \df(\tau[i,j], \overline{v^\tau_i, v^\tau_j})$ by \cref{coro:JLembedding}, it may be that the minimum cost path in $G(F(\tau))$ from $f(v^\tau_1)$ to $f(v^\tau_{\lvert \tau \rvert})$ of $\ell$ vertices is by a factor of $(1-\epsilon)$ cheaper than the corresponding path in $G(\tau)$, and that the path in $G(F(\tau))$ corresponding to the minimum cost path in $G(\tau)$ from $v^\tau_1$ to $v^\tau_{\lvert \tau \rvert}$ of $\ell$ vertices is by a factor of $(1+\epsilon)$ more expensive. Thus, $\df(\tau, \tau^\prime) \leq (1+\epsilon)/(1-\epsilon) \df(\tau, \simpl(\tau)) \leq (1+4\epsilon) \df(\tau, \simpl(\tau))$, for the given range of $\epsilon$. We conclude that $\tau^\prime$ is a $(4+16\epsilon)$-approximate minimum-error $\ell$-simplification of $\tau$.
    \\
    
    The remainder follows by the embedding guarantee of \cref{coro:JLembedding}.
\end{proof}

\begin{proof}[Proof of \cref{theo:klcenter}]
    For $i \in [k]$ we denote by $C_i$ the set of centers computed after the $i$\textsuperscript{th} iteration of the algorithm and let $r_i = \max_{\tau \in T} \min_{c \in C_i} \df(F(\tau), c)$. Clearly, $r_1 \geq \ldots \geq r_k$ and $r_k$ is the cost of the solution $C_k$. Furthermore, we let $C^\ast = \{c^\ast_1, \dots, c^\ast_k\}$ denote an optimal set of centers with cost $r^\ast$ and for $i \in [k]$ we let $T^\ast_i = \{\tau \in T \mid \forall j \in [k]: \df(\tau,c^\ast_i) \leq \df(\tau, c^\ast_j)\}$ denote the $i$\textsuperscript{th} optimal cluster, where we assume that ties are broken arbitrarily, such that $T^\ast_1, \dots, T^\ast_k$ form a partition of $T$.
    \\
    
    We prove the upper bound. Let $\sigma \in T$ be a curve that maximizes $\min_{c \in C_k} \df(F(\tau), c)$ among all $\tau \in T$, i.e., $\min_{c \in C_k} \df(F(\sigma), c) = r_k$ and let $C_{k+1} = C_{k} \cup \{\simpl(F(\sigma))\}$. By the pigeonhole principle there are two curves $c = \simpl(F(\tau)), c^\prime = \simpl(F(\tau^\prime)) \in C_{k+1}$, such that $\tau$ and $\tau^\prime$ that lie in the same optimal cluster $T^\ast_j$. W.l.o.g. assume that $c$ is added in an earlier iteration of the algorithm. We have 
    \begin{align*}
        r_k \leq & \df(c, F(\tau^\prime)) \leq \df(\simpl(F(\tau)), F(\tau)) + \df(F(\tau), F(\tau^\prime)) \\
        \leq & (1+\epsilon)(\df(\tau,c^\ast_j) + \df(c^\ast_j, \tau^\prime)) + \df(\simpl(F(\tau)), F(\tau)) \\
        \leq & 2(1+\epsilon) r^\ast + \df(\simpl(F(\tau)), F(\tau)) \leq (6+38\epsilon) r^\ast, 
    \end{align*}
    where the last inequality follows from \cref{theo:simpl_embedding} and by the fact that $\simpl(\tau)$ is a $(4+16\epsilon)$-approximate minimum-error $\ell$-simplification of $\tau$, hence
    \begin{align*}
        \df(\simpl(F(\tau)), F(\tau)) & \leq (1+\epsilon) \cdot \df(\simpl(\tau), \tau)  \leq (4+16\epsilon) (1+\epsilon) \df(c^\ast_j, \tau) \\
        & \leq (4+16\epsilon) (1+\epsilon) r^\ast \leq (4+36\epsilon)r^\ast.
    \end{align*}
    
    We prove the lower bound. For $i \in [k]$, let $c_i = \simpl(F(\tau_i))$ denote the single element of $C_i \setminus C_{i-1}$, where we let $C_0 = \emptyset$, and let $\sigma \in T$ be a curve that maximizes $\min_{i \in [k]} \df(F(\tau), c_i)$ among all $\tau \in T$. Thus, $\min_{c \in C_k} \df(F(\sigma), c) = r_k$. 
    \\
    
    Now, let $C = \{\simpl(\tau_1), \dots, \simpl(\tau_k)\}$ and $\sigma^\prime \in T$ be a curve that maximizes $\min_{i \in [k]} \df(\tau, \simpl(\tau_i))$ among all $\tau \in T$. Also, let $r = \df(\sigma^\prime, \simpl(\tau_i))$. Clearly, $r \geq r^\ast$ must hold, since $C^\ast$ is an optimal solution, so $C$ must have equal or larger cost.
    \\
    
    Furthermore, by \cref{theo:simpl_embedding} and by definition of $\sigma$ and $\sigma^\prime$ it holds that \[(1+\epsilon) \min_{i \in [k]} \df(\simpl(\tau_i), \sigma) \geq \min_{i \in [k]} \df(c_i, F(\sigma)) \geq \min_{i \in [k]} \df(c_i, F(\sigma^\prime)) \geq (1-\epsilon) \min_{i \in [k]} \df(\simpl(\tau_i), \sigma^\prime).\]
    We have
    \begin{align*}
        r_k = & \min_{i \in [k]} \df(c_i, F(\sigma)) = \min_{i \in [k]} \df(\simpl(F(\tau_i)), F(\sigma)) \geq (1-\epsilon) \min_{i \in [k]} \df(\simpl(\tau_i), \sigma) \\
        & \geq \frac{(1-\epsilon)^2}{(1+\epsilon)} r \geq (1-3\epsilon)r^\ast,
    \end{align*}
    where the first inequality follows from \cref{theo:simpl_embedding}.
    \\
    
    We now discuss the running time. First, exactly $k$ simplifications are computed during the execution of the algorithm. This has running time $O(k \cdot d^\prime m^3 \log m)$ by \cref{theo:simpl_embedding}. Then, in $k$ rounds, the algorithm computes $i \cdot n$ Fr\'echet distances to the $i$ centers already computed using Alt and Godau's algorithm. This takes time $O(k^2 n d^\prime m \log m)$.
\end{proof}

\section{Unrestricted medians}
\label{appendixunrestrictedmedians}
We begin by lower bounding the cost of the optimal solution on the set of projected curves.   

\begin{lemma}
\label{lemma:continuousmedianlowerbound}
Let $T$ be a set of $n$ polygonal curves in $\mathbb{R}^d$ of complexity at most $m$ each. Let $\ell \geq m$ and let $F$ be the embedding of \cref{coro:JLembedding} for $T$, with parameter $\epsilon \in (0,1)$. Then,  
\[
r^{\ast}\leq \frac{2}{1-\epsilon}\cdot r_f^{\ast},
\]
where $r^{\ast}$ is the cost of an  optimal solution to the $(k,\ell)$-median problem on $T$, and $r_f^{\ast}$ is the cost of an optimal solution to the  $(k,\ell)$-median problem on $F(T)$.  
\end{lemma}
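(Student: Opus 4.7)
The plan is to take an optimal $(k,\ell)$-median solution in the projected space and transform it into a feasible solution in the original space whose cost is at most $\tfrac{2}{1-\epsilon}$ times that of the projected solution. Since $\ell \geq m$, every input curve from $T$ is itself a feasible center (it has complexity at most $m \leq \ell$), so we can try to promote one input curve per cluster to the role of center.

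More precisely, let $C_f = \{c_1^f, \dots, c_k^f\}$ be an optimal $(k,\ell)$-median on $F(T)$, and let $T_1, \dots, T_k \subseteq T$ be the partition of $T$ induced by assigning each $\tau$ to the closest $c_i^f$ under $\df$. By an averaging argument, for each nonempty cluster $T_i$ there is a curve $\sigma_i \in T_i$ with
\[
\df(F(\sigma_i), c_i^f) \leq \frac{1}{|T_i|} \sum_{\tau \in T_i} \df(F(\tau), c_i^f).
\]
I would propose $\{\sigma_1, \dots, \sigma_k\}$ as the center set in the original space, which is feasible since each $\sigma_i$ has complexity at most $m \leq \ell$.

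To bound the cost, fix $i$ and any $\tau \in T_i$. Since both $\sigma_i$ and $\tau$ lie in $T$, the non-contraction guarantee of \cref{coro:JLembedding} gives $\df(\sigma_i, \tau) \leq \tfrac{1}{1-\epsilon}\df(F(\sigma_i), F(\tau))$, and by the triangle inequality in the projected space,
\[
\df(\sigma_i, \tau) \leq \frac{1}{1-\epsilon}\bigl(\df(F(\tau), c_i^f) + \df(c_i^f, F(\sigma_i))\bigr).
\]
Summing over $\tau \in T_i$ and applying the averaging bound on $\df(F(\sigma_i), c_i^f)$ gives
\[
\sum_{\tau \in T_i} \df(\tau, \sigma_i) \leq \frac{2}{1-\epsilon}\sum_{\tau \in T_i} \df(F(\tau), c_i^f),
\]
after which summing over all clusters yields $r^* \leq \sum_i \sum_{\tau \in T_i} \df(\tau, \sigma_i) \leq \tfrac{2}{1-\epsilon}\, r_f^*$.

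There is no real obstacle: the main ingredients are the averaging trick (to pick a cheap input curve in each projected cluster), the triangle inequality, and the one-sided JL guarantee of \cref{coro:JLembedding}. The only subtlety is ensuring that the curves to which we apply the non-contraction bound all lie in $T$, which they do by construction since both $\sigma_i$ and $\tau$ come from $T_i \subseteq T$; this is precisely why the assumption $\ell \geq m$ is essential, as it lets us avoid introducing external centers whose Fréchet distances to $T$ are not governed by the embedding.
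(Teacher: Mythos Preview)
Your proof is correct and follows essentially the same approach as the paper: pick a representative $\sigma_i$ in each projected cluster by averaging, use the non-contraction guarantee of \cref{coro:JLembedding} on pairs from $T$, apply the triangle inequality through $c_i^f$, and sum. Your explicit remark on why $\ell \geq m$ is needed (so that the $\sigma_i$ are feasible centers) is a nice addition.
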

\begin{proof}

Let $\{c_1^{f},\ldots,c_k^{f}\}$ be an optimal $(k,\ell)$-median solution on $F(T)$, let $T_1^f,\ldots, T_k^f$ be the corresponding subsets (clusters) of $F(T)$ associated with them and let $T_i = F^{-1}(T_i^f)$. 
By an averaging argument, for each $i\in[k]$ there exists a $\sigma_i\in T_i $ such that $\df(F(\sigma_i), c_i^f)\leq \frac{1}{|T_i|}\cdot \sum_{\tau\in T_i}\df(F(\tau),c_i^f)$. Then, 
\begin{align*}
r^{\ast} &\leq \sum_{i=1}^k \sum_{\tau\in T_i} \d_F(\tau,\sigma_i) \leq \frac{1}{1-\epsilon}\cdot  \sum_{i=1}^k \sum_{\tau\in T_i} \df(F(\tau),F(\sigma_i)) \\&\leq 
\frac{1}{1-\epsilon}\cdot
\sum_{i=1}^k \sum_{\tau\in T_i} \left(\df(F(\tau),{c}_i^f)+\df(c_i^f,F(\sigma_i))\right) \\&\leq 
\frac{1}{1-\epsilon}\cdot \left( r_f^{\ast}+
\sum_{i=1}^k \sum_{\tau\in T_i} \frac{1}{|T_i|}\cdot \sum_{\tau\in T_i}\df(F(\tau),c_i^f)\right) \leq \frac{2}{1-\epsilon} \cdot r_f^{\ast}.
\end{align*}
\end{proof}

Now, if we add the assumption that distances between the optimal medians in the original space and the input curves are approximately preserved, we obtain our main result.
\begin{theorem}
\label{theo:unremedians}
Let $T$ be a set of $n$ polygonal curves in $\RR^d$ of complexity at most $m$ each. Let $\ell\geq m$ and let $C^{\ast}$ be an optimal solution to the $(k,\ell)$-median problem with cost $r^{\ast}$. Let $F$ be the embedding of \cref{coro:JLembedding} for $T\cup C^{\ast}$, with parameter $\epsilon \in (0,1)$. Then,  
\[
\frac{1-\epsilon}{2}\cdot r^{\ast}\leq  r_f^{\ast} \leq (1+\epsilon)\cdot r^{\ast},
\]
where $ r_f^{\ast}$ is the optimal cost of the $(k,\ell)$-median clustering problem with input $F(T)$. 
\end{theorem}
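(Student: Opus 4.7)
The plan is to prove the two inequalities separately; both follow with essentially no new work on top of already-established results.

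For the lower bound $\frac{1-\epsilon}{2}\cdot r^{\ast}\leq r_f^{\ast}$, observe that it is algebraically equivalent to $r^{\ast}\leq \frac{2}{1-\epsilon}\cdot r_f^{\ast}$, which is exactly the content of \cref{lemma:continuousmedianlowerbound}. The embedding $F$ here is obtained by applying \cref{coro:JLembedding} to the superset $T\cup C^{\ast}$, so its restriction to $T$ still satisfies the hypothesis of \cref{lemma:continuousmedianlowerbound}, and the lemma applies verbatim to yield the lower bound.

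For the upper bound $r_f^{\ast}\leq (1+\epsilon)\cdot r^{\ast}$, the strategy is to exhibit a feasible candidate solution on $F(T)$ whose cost is at most $(1+\epsilon)\cdot r^{\ast}$ and then invoke the optimality of $r_f^{\ast}$. The natural candidate is the image $F(C^{\ast})=\{F(c^{\ast}) : c^{\ast}\in C^{\ast}\}$. Since $F$ is linear and maps vertices to vertices, each $F(c^{\ast})$ is a polygonal curve of complexity at most $\ell$, so $F(C^{\ast})$ is a feasible set of $k$ centres. By \cref{coro:JLembedding} applied to $T\cup C^{\ast}$, for every $\tau\in T$ and every $c^{\ast}\in C^{\ast}$ we have $\df(F(\tau),F(c^{\ast}))\leq (1+\epsilon)\cdot \df(\tau,c^{\ast})$. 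Taking the minimum over $c^{\ast}\in C^{\ast}$ pointwise and then summing over $\tau\in T$ yields
\[
\sum_{\tau\in T}\min_{c^{\ast}\in C^{\ast}}\df(F(\tau),F(c^{\ast}))\ \leq\ (1+\epsilon)\sum_{\tau\in T}\min_{c^{\ast}\in C^{\ast}}\df(\tau,c^{\ast})\ =\ (1+\epsilon)\cdot r^{\ast},
\]
and optimality of $r_f^{\ast}$ gives the desired upper bound.

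There is no real obstacle; the theorem is essentially an assembly of \cref{lemma:continuousmedianlowerbound} with the upper part of the JL-type guarantee of \cref{coro:JLembedding} applied to the optimal centres as well as to the inputs. The only mildly delicate point worth spelling out is the feasibility of $F(C^{\ast})$: since $F$ is linear, it maps an $\ell$-vertex polygonal curve to a polygonal curve of complexity at most $\ell$ (collapsing any three newly-collinear consecutive vertices if the definition requires it), so the candidate lies in the feasible set for the $(k,\ell)$-median problem on $F(T)$.
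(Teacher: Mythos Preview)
Your proof is correct and follows essentially the same approach as the paper: the lower bound is obtained directly from \cref{lemma:continuousmedianlowerbound}, and the upper bound by exhibiting $F(C^{\ast})$ as a feasible solution whose cost is at most $(1+\epsilon)r^{\ast}$ via the distance-preservation guarantee of \cref{coro:JLembedding}. Your explicit remark about the feasibility of $F(C^{\ast})$ (complexity at most $\ell$) is a detail the paper leaves implicit.
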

\begin{proof}
\cref{lemma:continuousmedianlowerbound} implies $\frac{1-\epsilon}{2}\cdot r^{\ast}\leq  r_f^{\ast} $. Moreover, since $F$ is as in \cref{coro:JLembedding},  it  satisfies $\forall \tau, \sigma \in T \cup C^{\ast}: (1-\epsilon) \df(\tau,\sigma) \leq \df(F(\tau),F(\sigma)) \leq (1+\epsilon)\df(\tau,\sigma)$, we conclude that $F(C^{\ast})$ is a solution with cost at most $(1+\epsilon) r^{\ast}$. Therefore $r_f^{\ast}\leq (1+\epsilon)\cdot r^{\ast}$. 
\end{proof}
Finally,  we can apply a JL transform to obtain the following result on dimension reduction of curves.
\unrestrictedcorollary*
\begin{proof}
The theorem follows by combining \cref{JLlemma}, \cref{coro:JLembedding} and  \cref{theo:unremedians}. In particular, we apply \cref{JLlemma} for a set of $O((n+k)^2 \cdot \ell^3)$ points determined by $T \cup C^{\ast}$, where $C^\ast$ is an optimal $(k,\ell)$-median solution for $T$, to obtain the $(1\pm \epsilon)$-embedding required by \cref{coro:JLembedding}. Then, \cref{coro:JLembedding} combined with \cref{theo:unremedians}, imply the statement. 
\end{proof}

\section{Restricted median}
\label{appendixrestrictedmedians}
We start with a lower bound on the optimal cost of the embedded curves. 
\begin{lemma}
\label{lemma:continuousmedianlowerboundrestricted}
Let $T$ be a set of $n$ polygonal curves in $\RR^d$ of complexity at most $m$ each and let $k \in \mathbb{N},\ell \in \NN_{\geq 2}$. Let $F$ be the embedding of \cref{coro:JLembedding} for $T \cup S(T)$, with parameter $\epsilon \in (0,1)$,
where $S(T)$ is the set of all vertex-restricted $\ell$-simplifications of all polygonal curves in $T$. 
Then,  
\[
r^{\ast}\leq \frac{6(1+\epsilon)}{1-\epsilon}\cdot r_f^{\ast},
\]
where $r^{\ast}$ is the cost of an  optimal solution to the $(k,\ell)$-median problem on $T$, and $r_f^{\ast}$ is the cost of an optimal solution to the  $(k,\ell)$-median problem on $F(T)$.  
\end{lemma}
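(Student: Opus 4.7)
The plan is to mirror the proof of Lemma~\ref{lemma:continuousmedianlowerbound} for the unrestricted case, inserting an extra step that handles the fact that an input curve $\sigma_i \in T$ may have complexity exceeding $\ell$. I would fix an optimal $(k,\ell)$-median $\{c_1^f, \dots, c_k^f\}$ for $F(T)$ with induced clusters $T_1^f, \dots, T_k^f$, set $T_i = F^{-1}(T_i^f)$, and for each $i$ pick $\sigma_i \in T_i$ satisfying the averaging bound $\df(F(\sigma_i), c_i^f) \leq \frac{1}{|T_i|}\sum_{\tau \in T_i} \df(F(\tau), c_i^f)$.

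Next I would construct a candidate $(k,\ell)$-median solution $\{s_1,\dots,s_k\}$ in the original space. Because $F$ is linear, the vertex-restricted $\ell$-simplifications of $\sigma_i$ correspond bijectively, via $F$, to the vertex-restricted $\ell$-simplifications of $F(\sigma_i)$, both being indexed by the same subsequence of vertex indices. I would define $s_i \in S(T)$ to be the vertex-restricted $\ell$-simplification of $\sigma_i$ whose image $F(s_i)$ is a minimum-error vertex-restricted $\ell$-simplification of $F(\sigma_i)$; the Imai-Iri plus Alt-Godau analysis used in the proof of Theorem~\ref{theo:simpl_embedding} then guarantees that $F(s_i)$ is a $4$-approximate minimum-error $\ell$-simplification of $F(\sigma_i)$, so $\df(F(\sigma_i), F(s_i)) \leq 4\,\df(F(\sigma_i), c_i^f)$ since $c_i^f$ has at most $\ell$ vertices. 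To upper bound $r^*$ by the cost of $\{s_1,\dots,s_k\}$, I would apply the triangle inequality $\df(\tau, s_i) \leq \df(\tau, \sigma_i) + \df(\sigma_i, s_i)$ and then use the non-contraction guarantees of the embedding on $T \cup S(T)$. Inserting $c_i^f$ via another triangle inequality into the first term and summing, the averaging bound collapses $\sum_i \sum_{\tau \in T_i} \df(\tau, \sigma_i)$ to at most $\frac{2}{1-\epsilon} r_f^*$ exactly as in the unrestricted case, while the second term contributes at most $\sum_i |T_i|\cdot \df(\sigma_i, s_i) \leq \frac{4}{1-\epsilon} r_f^*$ by the $4$-approximation property and one further averaging step. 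Adding the two yields the claimed bound (with some slack against the extra $(1+\epsilon)$ factor in the statement).

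The main obstacle is choosing $s_i$ so as to simultaneously (i) live in $S(T)$, so that the $(1\pm\epsilon)$-embedding actually controls $\df(\sigma_i, s_i)$, and (ii) make $F(s_i)$ a genuine $4$-approximate minimum-error simplification of $F(\sigma_i)$, so that the $\ell$-vertex witness $c_i^f$ delivers a useful bound. Taking $s_i$ to be a minimum-error simplification purely in the original space would force an extra $\frac{1+\epsilon}{1-\epsilon}$ distortion when transferring its $4$-approximation guarantee across the embedding, compounding the final factor. Exploiting the linearity of $F$ and the induced bijection between vertex-index subsequences is precisely what lets me pick $s_i$ whose image is optimal in the projected space while still belonging to $S(T)$, which is the technical heart of the argument.
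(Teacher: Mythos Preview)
Your proposal is correct and follows the same overall architecture as the paper's proof: pick $\sigma_i$ by averaging, pass to an $\ell$-simplification, split via the triangle inequality, and collapse both pieces back to $r_f^{\ast}$ using the embedding guarantee on $T\cup S(T)$ together with the $4$-approximation property of vertex-restricted simplifications from \cite[Lemma~7.1]{DBLP:conf/soda/BuchinDGHKLS19}.

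The one genuine difference is your choice of simplification. The paper takes $\tilde\sigma_i$ to be the optimal vertex-restricted $\ell$-simplification of $\sigma_i$ in the \emph{original} space; it must then compare $\df(F(\sigma_i),F(\tilde\sigma_i))$ to $\df(F(\sigma_i),\tilde\sigma_i^{\,f})$, where $\tilde\sigma_i^{\,f}$ is optimal in the projected space, and this costs an extra $\tfrac{1+\epsilon}{1-\epsilon}$ factor (yielding $\tfrac{6(1+\epsilon)}{(1-\epsilon)^2}$ in the final line of the paper's chain). You instead exploit linearity to pick $s_i\in S(T)$ so that $F(s_i)$ is already the optimal vertex-restricted $\ell$-simplification of $F(\sigma_i)$; the $4$-approximation bound against $c_i^f$ then applies directly to $\df(F(\sigma_i),F(s_i))$ without any transfer step, and you land on the cleaner constant $\tfrac{6}{1-\epsilon}$. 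This is a legitimate sharpening, and your final paragraph correctly isolates it as the crux of the argument.
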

\begin{proof}
Let $\{c_1^{f},\ldots,c_k^{f}\}$ be an optimal $(k,\ell)$-median solution for $F(T)$, let $T_1^f,\ldots, T_k^f$ be the corresponding subsets (clusters) of $F(T)$ associated with them and let $T_i = F^{-1}(T_i^f)$. 
By an averaging argument, for each $i\in[k]$ there exists a curve  $\sigma_i\in T_i $ such that $\df(F(\sigma_i), c_i^f)\leq \frac{1}{|T_i|}\cdot \sum_{\tau\in T_i}\df(F(\tau),c_i^f)$. Let 
$\tilde{\sigma_i}$ be an optimal vertex-restricted minimum-error $\ell$-simplification of $\sigma_i$ and let 
$\tilde{\sigma}_i^f$ be an optimal vertex-restricted minimum-error $\ell$-simplification of $F(\sigma_i)$. Then, 
\begin{align}
r^{\ast} &\leq \sum_{i=1}^k \sum_{\tau\in T_i} \df(\tau,\tilde{\sigma}_i) \nonumber\\& \leq
 \sum_{i=1}^k \sum_{\tau\in T_i} \left(\df(\tau,{\sigma}_i)+\df({\sigma}_i,\tilde{\sigma}_i) \right) \label{eq:trineq1}
 \\& \leq
\frac{1}{1-\epsilon}\cdot  \sum_{i=1}^k \sum_{\tau\in T_i} \left(\df(F(\tau),F(\sigma_i)) +\df(F(\sigma_i),F(\tilde{\sigma}_i)) \right) \nonumber
\\&\leq 
\frac{1}{1-\epsilon}\cdot
\sum_{i=1}^k \sum_{\tau\in T_i} \left(\df(F(\tau),{c}_i^f)+\df(c_i^f,F(\sigma_i))+\df(F(\sigma_i),F(\tilde{\sigma}_i))\right) \label{eq:trineq2}
\\&\leq 
\frac{1}{1-\epsilon}\cdot
\sum_{i=1}^k \sum_{\tau\in T_i} \left(\df(F(\tau),{c}_i^f)+\df(c_i^f,F(\sigma_i))+\frac{1+\epsilon}{1-\epsilon}\cdot \df(F(\sigma_i),\tilde{\sigma}_i^f)\right) \label{eq:simplpres}\\
&\leq 
\frac{1+\epsilon}{(1-\epsilon)^2}\cdot
\sum_{i=1}^k \sum_{\tau\in T_i} \left(\df(F(\tau),{c}_i^f)+5\cdot \df(c_i^f,F(\sigma_i))\right) \label{eq:weaksimpl}\\
& \leq 
\frac{1+\epsilon}{(1-\epsilon)^2}\cdot \left( r_f^{\ast}+
5\cdot \sum_{i=1}^k \sum_{\tau\in T_i} \frac{1}{|T_i|}\cdot \sum_{\tau\in T_i}\df(F(\tau),c_i^f)\right)\nonumber\\
&\leq \frac{6\cdot(1+\epsilon)}{(1-\epsilon)^2} \cdot r_f^{\ast}\nonumber,
\end{align}
where in (\ref{eq:trineq1}), and (\ref{eq:trineq2}), we apply the triangle inequality,
(\ref{eq:simplpres}) follows by the fact that $F$ approximately preserves distances to all vertex-restricted  $\ell$-simplifications, and  (\ref{eq:weaksimpl}) follows from \cite[Lemma 7.1]{DBLP:conf/soda/BuchinDGHKLS19}  which states that there exists a vertex-restricted $\ell$-simplification which is within distance at most $4$ times that of any non-restricted $\ell$-simplification.
\end{proof}

By  assuming that distances between the input curves, the optimal medians and all vertex-restricted $\ell$-simplifications are approximately preserved, we obtain the following theorem.
\begin{theorem}

Let $T$ be a set of $n$ polygonal curves in $\RR^d$ of complexity at most $m$ each and let $k \in \NN, \ell \in \NN_{\geq 2}$. Let $F$ be the embedding of \cref{coro:JLembedding} for $T \cup S(T)\cup C^{\ast}$ with parameter $\epsilon \in (0,1)$, where $S(T)$ is the set of all vertex-restricted $\ell$-simplifications of all polygonal curves in $T$ and $C^{\ast}$ is an optimal $(k,\ell)$-median solution for $T$. 
Then,  
\[
\frac{1-\epsilon}{6\cdot(1+\epsilon)}\cdot r^{\ast}\leq  r_f^{\ast} \leq (1+\epsilon)\cdot r^{\ast},
\]
where $r^{\ast}$ is the cost of an  optimal solution to the $(k,\ell)$-median problem on $T$, and $r_f^{\ast}$ is the cost of an optimal solution to the  $(k,\ell)$-median problem on $F(T)$.
\end{theorem}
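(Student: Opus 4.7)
The plan is to prove the two inequalities separately and then combine them, leveraging the fact that the hypothesis endows $F$ with good behavior on three independent collections of curves: the input curves $T$, their vertex-restricted $\ell$-simplifications $S(T)$, and an optimal median solution $C^{\ast}$. Each of these collections will be used in exactly one spot of the argument.

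For the upper bound $r_f^{\ast}\leq(1+\epsilon)\,r^{\ast}$, I would exhibit a feasible solution for $F(T)$ of the appropriate cost, namely $F(C^{\ast})=\{F(c):c\in C^{\ast}\}$. Since $F$ is linear and is only applied vertex by vertex, every $F(c)$ is a polygonal curve of complexity at most $\ell$, so $F(C^{\ast})$ is indeed feasible. Moreover, because $F$ is the embedding of \cref{coro:JLembedding} for the union $T\cup S(T)\cup C^{\ast}$, it is in particular a valid embedding for $T\cup C^{\ast}$, and \cref{coro:JLembedding} gives $\df(F(\tau),F(c))\leq(1+\epsilon)\df(\tau,c)$ for all $\tau\in T$ and $c\in C^{\ast}$. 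Routing each $\tau$ in the target space to the image of its nearest center in the original space then yields clustering cost at most $(1+\epsilon)\,r^{\ast}$, which is an upper bound on $r_f^{\ast}$.

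For the lower bound $\frac{1-\epsilon}{6(1+\epsilon)}\,r^{\ast}\leq r_f^{\ast}$, I would simply invoke \cref{lemma:continuousmedianlowerboundrestricted}. The hypothesis of that lemma is that $F$ is the embedding of \cref{coro:JLembedding} for $T\cup S(T)$, which is weaker than what we assume here (we also embed $C^{\ast}$). Applied directly, the lemma gives $r^{\ast}\leq \frac{6(1+\epsilon)}{1-\epsilon}\,r_f^{\ast}$, and rearranging yields the desired inequality.

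The plan has no genuine obstacle left to overcome, because the heavy lifting sits inside \cref{lemma:continuousmedianlowerboundrestricted}: the averaging argument that picks one representative curve $\sigma_i$ per target cluster, the triangle inequalities swapping between $\sigma_i$, its optimal $\ell$-simplification $\tilde{\sigma}_i$, and a vertex-restricted $\ell$-simplification of $F(\sigma_i)$, and the appeal to the factor-$4$ bound for vertex-restricted simplifications from \cite{DBLP:conf/soda/BuchinDGHKLS19}. The only subtleties worth double-checking in writing up the proof are (i) that the feasibility of $F(C^{\ast})$ really follows from linearity applied vertex-wise, and (ii) that enlarging the embedded set from $T\cup S(T)$ to $T\cup S(T)\cup C^{\ast}$ costs only a polynomial number of additional vectors, so the target-dimension estimate of $O(\epsilon^{-2}\ell\log(nm))$ in the companion restatable theorem is unaffected up to constants.
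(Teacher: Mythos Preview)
Your proposal is correct and mirrors the paper's own proof essentially verbatim: the upper bound is obtained by exhibiting $F(C^{\ast})$ as a feasible solution and invoking the $(1+\epsilon)$-expansion guarantee on pairs in $T\cup C^{\ast}$, and the lower bound is a direct citation of \cref{lemma:continuousmedianlowerboundrestricted}. Your additional remarks on feasibility of $F(C^{\ast})$ via linearity and on the polynomial blow-up of the embedded set are fine elaborations but add nothing structurally new.
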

\begin{proof}
\cref{lemma:continuousmedianlowerboundrestricted} implies $\frac{1-\epsilon}{6\cdot(1+\epsilon)}\cdot r^{\ast}\leq  r_f^{\ast} $. Moreover, since $F$ is as in \cref{coro:JLembedding},  it  satisfies $\forall \tau, \sigma \in T \cup C^{\ast}: (1-\epsilon) \df(\tau,\sigma) \leq \df(F(\tau),F(\sigma)) \leq (1+\epsilon)\df(\tau,\sigma)$, we conclude that $F(C^{\ast})$ is a solution with cost at most $(1+\epsilon) r^{\ast}$. Therefore $r_f^{\ast}\leq (1+\epsilon)\cdot r^{\ast}$. 
\end{proof}
Finally, we can effectively reduce the dimension by applying a JL transform. 
\restrictedcorollary*
\begin{proof}
The result follows by combining \cref{JLlemma}, \cref{coro:JLembedding} and  \cref{theo:unremedians}. In particular, we apply \cref{JLlemma} for a set of $O(((n+nm^{\ell}+k)^2 \cdot m^3)$ points determined by $T \cup S(T)\cup C^{\ast}$,  
where $S(T)$ is the set of all vertex-restricted $\ell$-simplifications of all polygonal curves in $T$,  and $C^{\ast}$ is an optimal $(k,\ell)$-median solution for $T$, to obtain the $(1\pm \epsilon)$-embedding required by \cref{coro:JLembedding}. Then, \cref{coro:JLembedding} combined with \cref{theo:unremedians}, imply the statement. 
\end{proof}

\end{document}